\documentclass{jpp}
\usepackage[square,sort,comma,numbers]{natbib}
\usepackage{graphicx}
\usepackage[utf8]{inputenc}
\usepackage[T1]{fontenc}
\usepackage{amssymb}
\usepackage{amsmath}
\usepackage{hyperref}
\usepackage{cleveref}
\usepackage{lipsum}
\graphicspath{{./fig/}{./png/}}

\newtheorem{example}{Example}
\newtheorem{remark}{Remark}

\def\K{{\bf K}}

\def\I{\bf{I}}

\def\Z{\bf{Z}}

\def\1{{\bf 1}}

\def\A{{\bf A}}

\def\k{{\bf k}}
\def\J{{\bf J}}
\def\B{{\bf B}}

\def\V{\vec{\bf V}}

\def\i{{\bf i}}
\def\j{{\bf j}}

\def\v{{\bf v}}
\def\x{{\bf x}}
\def\y{{\bf y}}

\def\grad{{\bf{grad}}}
\def\div{{\bf{div}}}
\def\rot{{\bf{rot}}}
\def\vol{{\rm vol}}

\newcommand{\dd}{{\rm d} {}}
\newcommand{\Fig}[1]{Figure~\ref{#1}}
\def\pr{{\rm pr}}
\usepackage{color}

\newtheorem{theorem}{Theorem}
\newtheorem{lemma}{Lemma}

\title{Minimum Quadratic Helicity States}


\author{
P. M. Akhmet'ev\aff{1, 2} \corresp{\email{pmakhmet@mail.ru}},
S. Candelaresi\aff{3} \and
A. Y. Smirnov\aff{2, 4}
}

\affiliation{
\aff{1}MIEM HSE, Moscow, Russia
\aff{2}IZMIRAN, Troitsk, Moscow region, Russia
\aff{3}Division of Mathematics, University of Dundee, Dundee, DD1 4HN, UK
\aff{4}National University of Science and Technology MISiS, Moscow, 119049, Russia
}

\begin{document}

\maketitle

\begin{abstract}
Building on previous results on the quadratic helicity in magnetohydrodynamics (MHD)
we investigate particular minimum helicity states.
Those are eigenfunctions of the curl operator and are shown to constitute
solutions of the quasi-stationary incompressible ideal MHD equations.
We then show that these states have indeed minimum quadratic helicity.
\end{abstract}

\section{Introduction}

Magnetic field line topology has been recognized to be a crucial part in the
evolution of magnetic fields in magnetohydrodynamics (MHD)
\cite{Woltjer-1958-489-91-PNAS, Parker-1972-174-499-ApJ, Taylor-1974-PrlE,
Frisch-Pouquet-Leorat-1975-JFluidMech, KleeorinRuzmaikin82,
Hornig-Schindler-1996-3-781-PhysPlasm,
fluxRings10, Yeates_Topology_2010, Wilmot-Smith-Pontin-2010-516-A5-AA, knotsDecay11}.
The most used quantifier of the field's topology is the magnetic helicity
\cite{MoffattKnottedness1969, ArnoldHopf1974, BergerField1984JFM, Enciso-Peralta-2016-113-8-PNAS}
which measures the linking, braiding and twisting of the field lines.
Through Arnold's inequality \cite{ArnoldHopf1974} it imposes a lower bound for the magnetic energy.
As the magnetic helicity is a (second order) invariant under non-dissipative evolution (non-resistive)
it imposes restrictions on the evolution of the magnetic field.
A further topological invariant $M$ of topological complexity $7$ can be found \citep{A2}
(the idea of the construction is presented in \cite{Ruzmaikin-Akhmetiev-1994-331-1-PhysPlasm}).
$M$ is a generalized helicity integral and constitutes a more effective lower bound for
magnetic energy compared to the magnetic helicity.
Informally, $M$ is a measure of how much the magnetic lines are of
the shape of helical Borromean rings.
Second order invariants are
the field line helicity
\citep{Yeates-Hornig-2011-18-102118-PhysPlasm, R-Y-H-W} that measures a weighted average
helicity along magnetic field lines, and the two quadratic helicities $\chi^{(2)}$, $\chi^{[2]}$,
which are to be considered as the $L^2$-norms of field line helicity.
The main problem to apply the high-order helicity is related with its calculation.
A local formula for quadratic helicity $\chi^{(2)}$ is proposed by \cite{A-C-S}.

In this work we consider 
another approach to calculate
the quadratic helicities of special cases of magnetic fields, 
which is based on the ergodic theorem.
Those are eigenvectors of the curl operator, which implies that the field is
also force-free, i.e.\ the Lorentz force vanishes.
We first introduce these fields and discuss some general properties by applying
the Lobachevskii geometry to MHD.
Then we show that they constitute quasi-stationary solutions of the ideal
incompressible MHD equations by using geodesic flows \cite{1}.
This is done on special manifolds equipped with a prescribed Riemannian metric, which corresponds to a dynamics
of the Anosov type.
Using the geodesic flow construction, we apply the results from hyperbolic dynamics
to calculate higher invariants of the magnetic field of which
presented calculations of quadratic helicities are the simplest examples.
Finally, we show that those fields constitute minimal quadratic helicity states.

\section{Eigenfunctions of the curl operator}
\label{sec: eigenfunctions}

\subsection{Positive Eigenfunction}

We recall formula for Hopf magnetic field, a generalization is presented in \cite{4}.
Let $S^3$ be the standard $3$-sphere
\begin{equation}
S^3 = \{ z_1,z_2 \vert z_1 \bar{z_1} + z_2 \bar{z_2} = 1 \}, \quad z_1, z_2 \in \mathbb{C},
\end{equation}
equipped with the standard Riemannian metric $g$. 
Let $\Theta: S^1 \times S^3 \to S^3$ be the standard action of the unit complex circle, given by
$$ \Theta(\varphi;z_1,z_2) = (z_1 \exp(\i \varphi), z_2 \exp(\i \varphi)). $$
Let $\B_{\rm right} = \dd\Theta/\dd\varphi$ be the Hopf magnetic field on $S^3$,
which is tangent to the Hopf fibers (fibers of $\Theta$).

\begin{lemma}\label{l1}
Consider the operator $\rot$ on the Riemannian manifold $(S^3, g)$ (see for the definition \cite{A-Kh} I.9.5), we get:
\begin{equation}\label{eq1}
\rot\B_{\rm right}(\x) = 2\B_{\rm right}(\x), \quad \x \in S^3.
\end{equation}
\end{lemma}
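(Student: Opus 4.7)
The plan is to verify \eqref{eq1} via the exterior-calculus definition of curl. On an oriented Riemannian $3$-manifold $(M,g)$ with volume form $\mu$, the curl of a vector field $B$ is characterized by $\dd B^{\flat} = \iota_{\rot B}\,\mu$, where $B^{\flat} := g(B,\cdot)$ denotes the metric-dual $1$-form. Since $(S^3,g)$ is isometrically embedded in $\R^4$, the strategy is to compute $\B_{\rm right}^{\flat}$ and $\dd\B_{\rm right}^{\flat}$ as the pullbacks of Euclidean forms on $\R^4$, and then match the resulting $2$-form with $\iota_{\B_{\rm right}}\mu_{S^3}$.

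First I would write $\B_{\rm right}$ in real coordinates $z_j = x_j + \i y_j$. Differentiating the action $\Theta$ at $\varphi = 0$ identifies $\B_{\rm right}$ with the restriction to $S^3$ of the ambient vector field
\begin{equation*}
V = -y_1\,\partial_{x_1} + x_1\,\partial_{y_1} - y_2\,\partial_{x_2} + x_2\,\partial_{y_2},
\end{equation*}
which is tangent to $S^3$ because it is orthogonal to the radial normal. Since the round metric is induced from the Euclidean one, $\B_{\rm right}^{\flat}$ is the pullback of $\alpha := -y_1\,\dd x_1 + x_1\,\dd y_1 - y_2\,\dd x_2 + x_2\,\dd y_2$, and a direct computation gives $\dd\alpha = 2\omega$, where $\omega := \dd x_1\wedge\dd y_1 + \dd x_2\wedge\dd y_2$ is the standard Kähler form on $\C^2$.

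It then remains to establish the pointwise identity
\begin{equation*}
\omega\big|_{TS^3} \;=\; \iota_{\B_{\rm right}}\,\mu_{S^3},
\end{equation*}
from which \eqref{eq1} follows at once. I would prove this by a symmetry reduction: the group $U(2)$ acts on $\C^2$ by orientation-preserving isometries of $S^3$ and, commuting with multiplication by $\i$, preserves both $\B_{\rm right}$ and $\omega$. Hence each side is a $U(2)$-invariant $2$-form on $S^3$, and transitivity of the action reduces the identity to a check at a single point, e.g.\ $(1,0,0,0)$, where $T S^3 = \operatorname{span}\{\partial_{y_1},\partial_{x_2},\partial_{y_2}\}$ and $\B_{\rm right} = \partial_{y_1}$. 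There $\dd x_1$ vanishes on tangent vectors, so $\omega|_{TS^3} = \dd x_2\wedge\dd y_2 = \iota_{\partial_{y_1}}(\dd y_1\wedge\dd x_2\wedge\dd y_2) = \iota_{\B_{\rm right}}\mu_{S^3}$.

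The main obstacle is the careful book-keeping when restricting ambient $2$-forms to $TS^3$ and identifying the Riemannian volume form $\mu_{S^3} = \iota_N \mu_{\R^4}$; the $U(2)$-equivariance argument circumvents doing this globally by reducing everything to one point. An equivalent approach, which I would keep as a backup, is to identify $S^3 \cong SU(2)$ and recognize $\B_{\rm right}^{\flat}$ as a standard Maurer–Cartan $1$-form, for which the factor $2$ then appears directly from the structure constants of $\mathfrak{su}(2)$.
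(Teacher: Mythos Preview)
Your argument is correct. The paper likewise works via the exterior-calculus definition of curl and also begins by writing down the ambient Euclidean $1$-form dual to $\B_{\rm right}$, but from there it takes a different route: it introduces explicit hyperspherical coordinates $(\theta_1,\theta_2,\theta_3)$ on $S^3$ via a parametrization $\Psi$, computes the pullback $\beta_{\rm right}^{S^3} = \Psi^{*}\beta_{\rm right}^{\R^4}$ in those coordinates, and then verifies $\star\dd\beta_{\rm right}^{S^3} = 2\beta_{\rm right}^{S^3}$ by a direct Hodge-star calculation; a closing remark invokes the transitive isometric left $S^3$-action, paralleling your $U(2)$-reduction. Your route is shorter and more conceptual: the identity $\dd\alpha = 2\omega$ is immediate, and the single-point check of $\omega\vert_{TS^3} = \iota_{\B_{\rm right}}\mu_{S^3}$ replaces the coordinate-dependent Hodge computation entirely. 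The paper's explicit-coordinate approach, on the other hand, is more hands-on and dovetails with the style of the subsequent Lemmas, which are also carried out in local coordinates; your Maurer--Cartan backup would be the natural choice if one later wants the eigenvalue $2$ to be read off directly from the $\mathfrak{su}(2)$ structure constants.
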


\begin{proof}
This is Example 5.2 in  \cite{ArnoldHopf1974}
However, we show here direct calculations of this lemma.
For that we define the curve $\Theta$ on $\mathbb{R}^4$ rather than on $\mathbb{C}^2$:
\begin{eqnarray}
\Theta(\varphi, x_0, x_1, x_2, x_3) & = & \left( x_0\cos(\varphi) - x_1\sin(\varphi), x_0\sin(\varphi) + x_1\cos(\varphi), \right. \nonumber \\
 & & \left. x_2\cos(\varphi) - x_3\sin(\varphi), x_2\sin(\varphi) + x_3\cos(\varphi) \right),
\end{eqnarray}
with the coordinates $x_0$, $x_1$, $x_2$ and $x_3$.
From that we can compute $\B_{\rm right} = \dd\Theta/\dd\varphi$ from which we define
the associated differential one-form on $\mathbb{R}^4$:
\begin{equation}
\beta_{\rm right}^{\rm R4} =
B^0_{\rm right} \dd x^0 + B^1_{\rm right} \dd x^1 + B^2_{\rm right} \dd x^2 + B^3_{\rm right} \dd x^3.
\end{equation}

We now define the mapping between points on the three-sphere $S^3$ and $\mathbb{R}^4$:
\begin{eqnarray}
\Psi & = & (x_0, x_1, x_2, x_3) \\
x_0 & = & \cos(\theta_1) \nonumber \\
x_1 & = & \sin(\theta_1)\cos(\theta_2) \nonumber \\
x_2 & = & \sin(\theta_1)\sin(\theta_2) \cos(\theta_3) \nonumber \\
x_3 & = & \sin(\theta_1)\sin(\theta_2) \sin(\theta_3), \nonumber
\end{eqnarray}
with the coordinates of $S^3$: $\theta_1 \in [0, 2\pi)$, $\theta_2 \in [0, \pi]$
and $\theta_3 \in [0, \pi]$.
We can now compute the differential one-form $\beta_{\rm right}^{\rm R4}$ on $S^3$ as the pull-back
under the mapping $\Psi$
\begin{eqnarray}
\beta_{\rm right}^{\rm S3} & = & \Psi^{*} \beta_{\rm right}^{\rm R4} \\
 & = & \cos(\varphi)\cos(\theta_2) \dd\theta^1 - \cos(\varphi)\cos(\theta_1)\sin(\theta_1)\sin(\theta_2) \dd\theta^2 \nonumber \\
  & & + \cos(\varphi)\sin^2(\theta_1)\sin^2(\theta_2) \dd\theta^3 \nonumber
\end{eqnarray}

The curl operation on the vector field $\B_{\rm right}^{S3}$ corresponds to the exterior differential
of the one-form $\beta_{\rm right}^{S3}$ which results in a two-form $\dd\beta_{\rm right}^{S3}$.
We take it's Hodge-dual $\star\dd\beta_{\rm right}^{S3}$,
with the volume element $\dd V = \sin^2{\theta_1} \sin{\theta_2} \dd\theta_1 \wedge \dd\theta_2 \wedge \dd\theta_3$,
compare it with $\beta_{\rm right}^{S3}$ and find 
\begin{eqnarray}
\star\dd\beta_{\rm right}^{S3} & = &
2\cos(\varphi)\cos(\theta_2) \dd\theta^1
- 2\cos(\varphi)\cos(\theta_1)\sin(\theta_1)\sin(\theta_2)  \dd\theta^2 \nonumber \\
 & & + 2\cos(\varphi)\sin^2(\theta_1)\sin^2(\theta_2)  \dd\theta^3.
\end{eqnarray}
Hence the result
\begin{equation}\label{analog}
\star\dd\beta_{\rm right}^{\rm S3} = 2\beta_{\rm right}^{\rm S3},
\end{equation}
which corresponds to equation \eqref{eq1}.

The left transformation of $S^3$ (see the beginning of the next section for the right transformation)
is transitive and is an isometry
This isometry commutes with the curl operator and keeps the Hopf fibration (which is determined by the right $\i$-multiplication).
This proves the equation (\ref{eq1}) at an arbitrary point on $S^3$.
\end{proof}

Using a simple stereographic projection we can plot the field lines for $\B_{\rm right}$
(see \Fig{fig: b_right streamlines}).
The traced field lines are simply the Hopf rings, mutually linked circles that fill $\mathbb{R}^3$.

\begin{figure}
\begin{center}
\includegraphics[width=0.7\columnwidth]{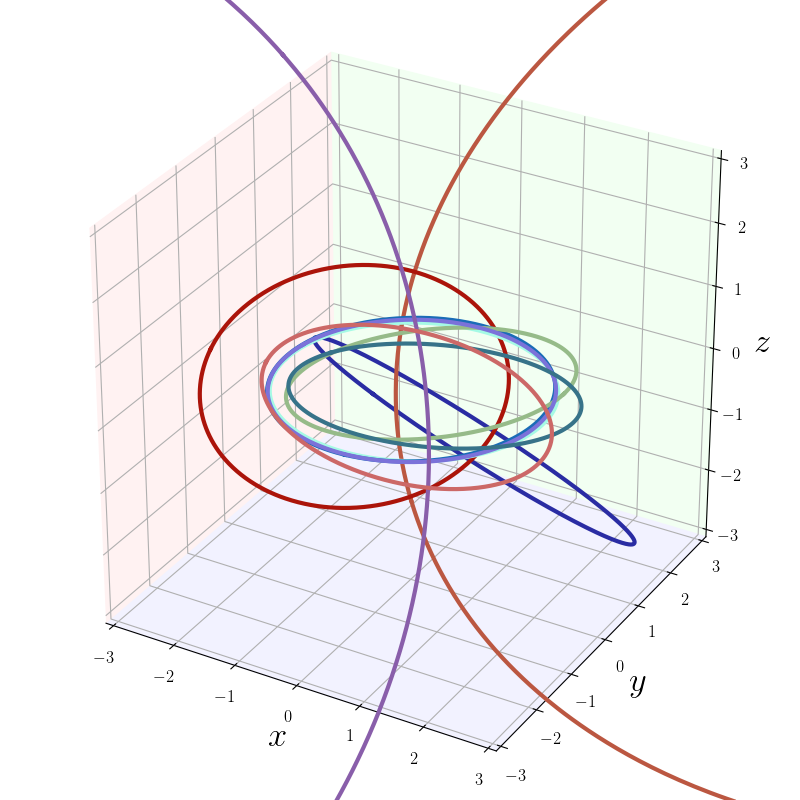} \\
\includegraphics[width=0.7\columnwidth]{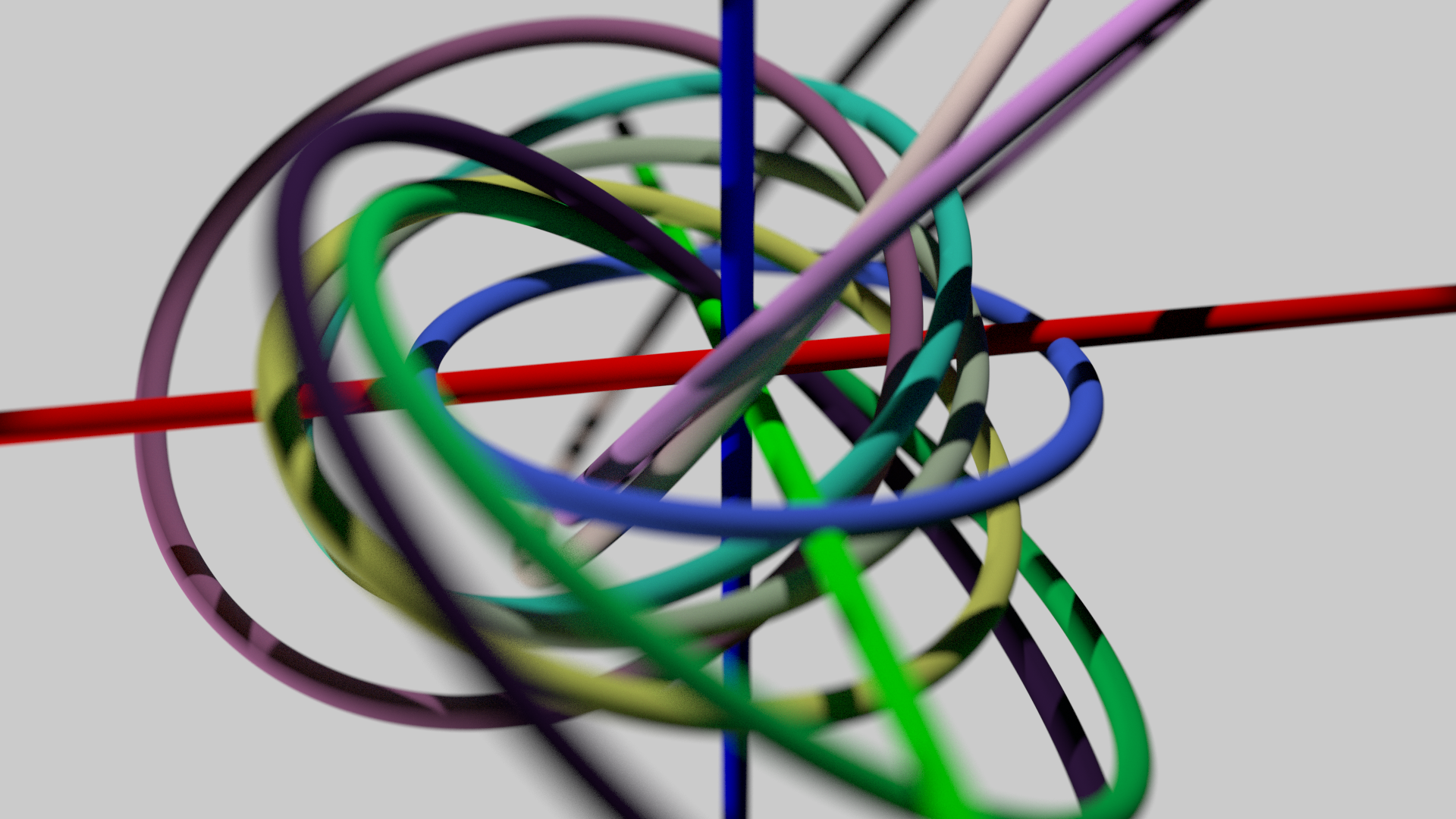}
\end{center}
\caption[]{
Two plots for the field $\B_{\rm right}$ in the stereographic projection.
Here we use a few random field lines to show case the nature of this field.
In the lower panel the axis are $x$ (red), $y$ (green) and $z$ (blue).
}
\label{fig:  b_right streamlines}
\end{figure}

It is natural to investigate the Hopf magnetic vector field from Hamiltonian dynamics.
Consider the standard symplectic form $\dd x_0 \wedge \dd x_1 + \dd x_2 \wedge \dd x_3$
in $\mathbb{R}^4$.
Consider the Hamiltonian $H(x_0,x_1,x_2,x_3)= (x_0^2 + x_1^2 + x_2^2 + x_3^2)$.
Obviously, the Hopf magnetic field $2\B_{\rm right}$ determines the Hamilton flow for $H=1$. 
In a general case a Hamilton flow is divergence-free, because it keeps the simplistic
structure and the Hamiltonian.
Thus, the fundamental $3$-form $\dd\Omega$ on the prescribed energy level $H = {\rm const}$ has
to be defined by the formula
$\dd \Omega \wedge \dd H = \dd x_0 \wedge \dd x_1 \wedge \dd x_2 \wedge \dd x_3$.
In the example the flow $\B_{\rm right}$ is integrable: the functions
$F_1(x_0, x_1, x_2, x_3) = x^2_0 + x_1^2$, $F_2(x_0, x_1, x_2, x_3) = x^2_2 + x^2_3$ are first-integrals.
After a small generic perturbation of the standard simplectic form
$\dd x_0 \wedge \dd x_1 + \dd x_2 \wedge \dd x_3 \mapsto \dd x_0 \wedge \dd x_1 + \dd x_2 \wedge \dd x_3 + \varepsilon \omega$,
where coefficients of the $2$-form $\omega$ depend on points in $\mathbb{R}^4$, we obtain a
non-integrable system with chaotic magnetic lines.
The examples from \cite{4} correspond to the non-small perturbation  
$\dd x_0 \wedge \dd x_1 + \dd x_2 \wedge \dd x_3 \mapsto \dd x_0 \wedge \dd x_1 + \dd x_2 \wedge \dd x_3 + a \dd x_0 \wedge \dd x_1$,
which admits closed knotted magnetic lines for rational parameter $a$.

\subsection{Preliminary discussion}
In the next section a notion of geodesic flows on surfaces is required.
The Hopf magnetic field represents the universal ($2$-sheets) covering over the
geodesic flow on the standard 2-sphere $S^2$.
The geodesic flow is a dynamic system on 3D manifold $\Lambda(S^2)$,
which is the spherization of the tangent bundle of $S^2$.
A point on the manifold $\Lambda(S^2)$ is pair $(x \in S^2, \vec{\xi}_x)$,
where $\vec{\xi}_x$ is a unit tangent $S^2$, attached at the point $x \in S^2$.
The manifold $\Lambda(S^2)$ is diffeomorphic to the transformation group $SO(3)$.
To proof this, it is sufficient to consider $S^2 \subset \mathbb{R}^3$.
Each point $(x, \vec{\xi}_x)$ determines a $2$-orthogonal base in $\mathbb{R}^3$
that is a point on manifold $SO(3)$.
The manifold $SO(3)$ is the base of the double covering $S^3 \to SO(3)$.

The tautological vector field
$\vec{\xi}_{x}$ on $T(\Lambda(S^2))$ determines the Hamiltonian dynamic system,
which is called the geodesic flow.
A point $x \in S^2$ moves along $\vec{\xi}_x$.
The geodesic flow, lifted on the universal covering $S^3$, determines the magnetic
(divergence-free) vector field, which is called the horizontal magnetic field on $S^3$.
The Hopf vector field is called the vertical vector field.
This vector field corresponds
to the rotation of fibers of the standard projection $\Lambda(S^2) \to S^2$. 

By the same argument one may define the geodesic field on $\Lambda(M^2)$,
where $M^2$ is a surface of a constant negative scalar curvature. 
One may take $M^2$ as the closed surface of a constant 
Riemann surface (with a constant negative scalar curvature).
However, this example is not suitable for MHD, because $\Lambda(M^2)$ admits a complicated homotopy type.

The surface $M^2$ can be non-compact and coincide with the standard Lobachevskii plane $L^2$.
In this case $\Lambda(L^2)$ also is non-compact.
One may take an isometric action $G \times L^2 \to L^2$ with locally finite orbits.
The group $G$ is called a Fuchsian group.
This action can have fixed points.
In this case the manifold $\Lambda(L^2)$ admits the quotient $\Lambda(L^2)/G$.
This quotient can be considered as an interior of a closed Riemannian 3D-manifold,
with a metric that has singularities (pinches).
For many examples $\Lambda(L^2)/G$ is a branching covering over the standard $S^3$-sphere,
equipped with the Riemanian metric, which corresponds to the standard geodesic metric
on $\Lambda(L^2)$.
We observe that in many cases of $G$ this $\Lambda(L^2)/G$ admits a Riemannian metric, which
is the conformal equivalent to the standard metric on $S^3$. 
We consider the most fundamental example of $G$, which is called the modular group.
The fundamental domain of the modular group is shown in \Fig{fig: infinit}.
There are two generators: the generator of the order 3 acts by the rotation trough the angle
$\frac{2\pi}{3}$ at the central triangle, the generator of the order 2 acts
by the central symmetry trough a point at the boundary geodesic line of the triangle.
These two generators are not commuted and the modular group is the non-commutative product $\Z_3 \ast \Z_2$.

We interpret the geodesic flow on $\Lambda(L^2)/G$ as a force-free magnetic field on
$S^3$ (Section \ref{sec: force-free config}) and as MHD-solitons (Section \ref{55}),
which are generalizations of the Hopf magnetic field.
The scalar factor of the metric we interpret as density on $S^3$, while ramifications
curves we interpret as magnetic pinches.
The example of the geodesic flow with the modular group $G$ in dynamical systems was considered by \cite{Ghy}.
In MHD this gives a testing example to calculate higher invariants of magnetic fields.  

Let us briefly explain the reason to investigate magnetic fields using this technique.
The present trend in solar physic and cosmology is to investigate the complicated fine
structure of observable magnetic fields in non-homogeneous space.
For that stability conditions for magnetic lines are required. 
The Hopf magnetic field is a force-free configuration with the global minimum of the
magnetic energy, as was discovered by \cite{ArnoldHopf1974}.
This configuration corresponds to a integrable dynamics.
After a small perturbation we get a non-integrable dynamics with complicated chaotic field lines.
For hyperbolic geodesic flows the situation is opposite.
The dynamic of the modular group itself is non-integrable, but the trajectories
are pursued in the sense of D.\ V.\ Anosov \citep{Ano67}.
After a small perturbation,
the properties of dynamics survives because the Anosov Condition,
discovered by D.\ V.\ Anosov in his famous paper
``Geodesic flows on closed Riemannian manifolds of negative curvature'' \citep{Ano67}.

\subsection{Negative Eigenfunction}
\label{sec: b_left}

The magnetic field $\B_{\rm right}$ is
generalized by the following construction.
Take $S^3$ as the unit quaternions
$\left\{a+ b\i + c\j+d\k \middle| a^2 + b^2 + c^2 + d^2 = 1\right\}$.
Take a tangent quaternion $\xi \in T_{\x=1}(S^3)$ and define the vector-field $\B_{\rm right}(\x)=\x\xi$
by the right multiplication.
In the case $\xi=\i$ we get the vector-field from Lemma \ref{l1}. 
In the case $\xi=\j$ the vector-field $\B_{\rm right}$ is not invariant with respect to the action
$\Theta$ along the Hopf fibers. To get the invariant vector-field $\B_{\rm left}$ we define
$\B_{\rm left}=\j \x$, $\x \in S^3$, by the left multiplication.
We get:
\begin{equation}\label{eq11}
\rot\B_{\rm left}(\x) = -2\B_{\rm left}(\x), \quad \x \in S^3.
\end{equation}
This follows from the fact that the conjugation
$$\left( a + b\i + c\j + d\k \right)^{*} \mapsto a - b\i - c\j - d\k,$$
which is an antiautomorphism and an isometry, transforms right vector-fields to left-vector fields. 
This antiautomorphism changes the orientation on $S^3$.
Therefore, equation \eqref{eq1} for the vector-field $\B_{\rm right}$ implies equation \eqref{eq11} for $\B_{\rm left}$.

The vector-field $\B_{\rm left}$ admits an alternative description
by means of geodesic flows on the Riemann sphere
$S^2$ in the following way.
The sphere $(S^3, g)$ is diffeomorphic to the universal (2-sheeted) covering over the manifold $SO(3)$,
equipped with the standard Riemannian metric.
The manifold $SO(3)$ is diffeomorphic to the spherization of the tangent bundle over the
standard 2-sphere $S^2$, denoted by $\Lambda(S^2)$.
The projection $p_1(\x): \Lambda(S^2) \to S^2$, $\x \in \Lambda(S^2)$ is well-defined.
A circle fiber over $p_1(\x) \in S^2, \x \in \Lambda(S^2)$
is visualized as a great circle $S^1 \subset S^2$, with the center $p_1(\x)$, equipped with the prescribed orientation.

Consider the spherization of the (trivial) tangent bundle over the plane $\Lambda(\mathbb{R}^2)$.  
Denote by $\B_{\rm left}$ the magnetic field on $\Lambda(\mathbb{R}^2)$, which is tangent to the geodesic flow.
The natural Riemannian metric $h$ on $\Lambda(\mathbb{R}^2)$ coincides with the standard metric of
the decomposition $\Lambda(\mathbb{R}^2) = \mathbb{R}^2 \times S^1$. 
 
\begin{lemma}\label{l2}
The equation:
\begin{equation}\label{eq2}
\rot \B_{\rm left}(\x) = -\B_{\rm left}(\x), \quad \x \in \Lambda(\mathbb{R}^2),
\end{equation}
in the metric $h$ is satisfied.
\end{lemma}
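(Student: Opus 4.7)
\medskip

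\noindent\textbf{Proof proposal.} The plan is to mimic the coordinate computation of Lemma~\ref{l1}, but now in the much simpler flat setting, since $\Lambda(\mathbb{R}^2)$ is trivializable and the metric $h$ is the product metric. First I would fix global coordinates $(x^1, x^2, \varphi)$ on $\Lambda(\mathbb{R}^2) = \mathbb{R}^2 \times S^1$, where $(x^1, x^2)$ are the standard Cartesian coordinates on $\mathbb{R}^2$ and $\varphi \in [0, 2\pi)$ parametrizes the unit tangent direction. In these coordinates $h = (\dd x^1)^2 + (\dd x^2)^2 + \dd\varphi^2$.

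Next I would write down $\B_{\rm left}$ explicitly. A geodesic on $\mathbb{R}^2$ is a straight line, and its lift to the unit tangent bundle keeps the direction $\varphi$ constant. Hence the generator of the geodesic flow is
\begin{equation*}
\B_{\rm left}(x^1, x^2, \varphi) \;=\; \cos\varphi \, \partial_{x^1} \;+\; \sin\varphi \, \partial_{x^2}.
\end{equation*}
The associated $1$-form, using $h$ to lower the index, is
\begin{equation*}
\beta_{\rm left} \;=\; \cos\varphi \, \dd x^1 \;+\; \sin\varphi \, \dd x^2.
\end{equation*}
A short computation then gives
\begin{equation*}
\dd\beta_{\rm left} \;=\; -\sin\varphi \, \dd\varphi \wedge \dd x^1 \;+\; \cos\varphi \, \dd\varphi \wedge \dd x^2.
\end{equation*}

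Finally I would apply the Hodge star with respect to $h$ and the orientation $\dd x^1 \wedge \dd x^2 \wedge \dd\varphi$. Using $\star(\dd\varphi \wedge \dd x^1) = \dd x^2$ and $\star(\dd\varphi \wedge \dd x^2) = -\dd x^1$, one obtains
\begin{equation*}
\star\, \dd\beta_{\rm left} \;=\; -\sin\varphi \, \dd x^2 \;-\; \cos\varphi \, \dd x^1 \;=\; -\beta_{\rm left},
\end{equation*}
which is precisely equation \eqref{eq2}.

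The computation is routine; the only real point to watch is the orientation convention for the Hodge star, since that is what distinguishes $\B_{\rm left}$ (eigenvalue $-1$) from the analogous right-handed geodesic flow (eigenvalue $+1$). I would therefore state the chosen orientation of $\Lambda(\mathbb{R}^2)$ explicitly at the start, and then observe that reversing the orientation of the $S^1$-fibre (i.e.\ replacing $\varphi$ by $-\varphi$) produces the opposite-sign eigenvector, consistent with the antiautomorphism argument used after \eqref{eq11}. This also makes the comparison with Lemma~\ref{l1} transparent: the eigenvalue here is $-1$ rather than $-2$ because, as noted in the Remark, the Hopf fibre in $\Lambda(S^2)$ is twice as long as in $S^3$, and in the local model $\Lambda(\mathbb{R}^2)$ we have taken the $S^1$-factor of unit circumference scale.
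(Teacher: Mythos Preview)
Your proof is correct and follows essentially the same approach as the paper: write the geodesic vector field in the product coordinates $(x,y,\varphi)$ on $\Lambda(\mathbb{R}^2)=\mathbb{R}^2\times S^1$ and compute the curl directly. The only cosmetic difference is that you phrase the computation via $\star\,\dd\beta_{\rm left}$ (as the paper did for Lemma~\ref{l1}), whereas the paper's proof of Lemma~\ref{l2} uses the curl determinant and checks the identity at a single point; your version is in fact cleaner and more complete, and your explicit remark about the orientation convention is a worthwhile addition.
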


\begin{proof}
The manifold $\Lambda(\mathbb{R}^2)$ is equipped with the projection $p_2(\x): \Lambda(\mathbb{R}^2) \to \mathbb{R}^2$. 
Take the Cartesian coordinates in $\mathbb{R}^2$ and the coordinate $\varphi$ along  fibers.
In the coordinates $(x,y,\varphi)$ on $\Lambda(\mathbb{R}^2)$ the magnetic field $\B$ is defined as
$B_x = \cos(\varphi), B_y = \sin(\varphi), B_{\varphi}=0$.
The components of $\rot \B$ are defined by the determinant:
\begin{eqnarray}\label{rot}
\begin{array}{ccc}
\frac{\partial }{\partial x} & \frac{\partial }{\partial y} & \frac{\partial }{\partial \varphi} \\
B_x  & B_y &  B_{\varphi} \\
\ast & \ast & \ast  \\
\end{array}.
\end{eqnarray}
Lemma \ref{l2} is proven by  following calculations:
at $\x$ for $\B_{\rm left} = \B$ (one may assume $\varphi = \frac{\pi}{2}$): $B_x=0, B_y=1, B_{\varphi}=0$;
$(\rot \B)_y= -\frac{\partial B_{\varphi}}{\partial x} + \frac{\partial B_x}{\partial \varphi} = -1$, $(\rot \B)_x=(\rot \B)_z=0$.
\end{proof}

\subsection{Eigenfunctions on Different Manifolds}

Consider the spherization of the tangent bundle over the Riemannian sphere $\Lambda(S^2)$ and
the spherization of the tangent bundle over the Lobachevskii plane $\Lambda(L^2)$.
The spaces $\Lambda(S^2)$ and $\Lambda(L^2)$ are equipped with the standard Riemannian metrics $g_S$ and $g_L$.
The metrics correspond to the standard metrics on $S^2$ and $L^2$ and the standard metric on the circle.
Denote by $\B_{\rm left}$ the magnetic field on $(S^3,g)$ as the pull-back of the magnetic field on $\Lambda(S^2)$,
which is tangent to the geodesic flow.
The geodesic magnetic fields on $\Lambda(S^2)$, $\Lambda(L^2)$ are also denoted by $\B_{\rm left}$.

\begin{lemma}\label{l3}
The equation \eqref{eq2}
is satisfied 
on $(\Lambda(S^2),g_S)$ and $(\Lambda(L^2),g_L)$.
\end{lemma}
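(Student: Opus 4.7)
The plan is to reduce each assertion to a single-point verification by homogeneity, and then treat the two cases with their most natural models.

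For $\Lambda(S^2)$, I would use the double covering $p : S^3 \to SO(3) \cong \Lambda(S^2)$. By the remark following Lemma \ref{l1}, this is a local isometry once one accounts for the factor of $2$ in the length of a Hopf fiber. The pullback $p^{*}\B_{\rm left}^{\Lambda(S^2)}$ coincides with the left-multiplication Hopf field $\B_{\rm left}^{S^3}$ of Section \ref{sec: b_left}, so equation \eqref{eq11} gives $\rot(p^{*}\B_{\rm left}) = -2\, p^{*}\B_{\rm left}$ on $S^3$. The fiber-length factor $2$ converts the eigenvalue $-2$ on $S^3$ into $-1$ when the identity is pushed down to $\Lambda(S^2)$, which is exactly \eqref{eq2}.

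For $\Lambda(L^2)$, I would invoke the simply transitive action of $PSL(2,\mathbb{R})$ on $\Lambda(L^2)$ by orientation-preserving isometries. This action preserves the geodesic flow (hence $\B_{\rm left}$) and the metric $g_L$ (hence $\rot$), so both sides of \eqref{eq2} are isometry-equivariant and it is enough to check the identity at a single point $\x_0$. At $\x_0$ I would introduce geodesic normal coordinates $(x,y)$ on $L^2$ centered at $p_1(\x_0)$, together with the fiber angle $\varphi$. Because the Christoffel symbols of the hyperbolic metric vanish at the origin of normal coordinates, the Sasaki-type metric $g_L$ on $\Lambda(L^2)$ agrees with the product metric of $\Lambda(\mathbb{R}^2)$ at $\x_0$ through first order, and the geodesic field there has the explicit form $\B_{\rm left} = \cos\varphi\,\partial_x + \sin\varphi\,\partial_y$. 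The determinant formula \eqref{rot} used in the proof of Lemma \ref{l2} then applies verbatim at $\x_0$.

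The chief obstacle is the bookkeeping around the Sasaki metric on the unit tangent bundle: one must verify that the horizontal lift of the geodesic direction and the Hodge star on $\Lambda(L^2)$ really do reduce at $\x_0$ to their Euclidean counterparts once the base-point Christoffel symbols vanish. This is standard but merits attention. A more uniform alternative would be to identify $\Lambda(L^2)$ with the Lie group $PSL(2,\mathbb{R})$ equipped with a left-invariant metric and to realize $\B_{\rm left}$ as a generator of a one-parameter subgroup, mirroring the $S^3 = SU(2)$ picture that underlies Lemma \ref{l1}; this would place the two cases of Lemma \ref{l3} on essentially the same footing.
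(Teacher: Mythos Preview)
Your proposal is correct, but it diverges from the paper's argument in the $\Lambda(S^2)$ case. The paper treats both manifolds uniformly by comparing each, at a single point, with the flat model $\Lambda(\mathbb{R}^2)$ of Lemma~\ref{l2}: it takes the stereographic projection $S^2 \to T_{\x}$ (respectively the identity on the Poincar\'e disk at its center) as a conformal map that is an isometry to first order, lifts it to a map between neighborhoods in the spherized tangent bundles, and then observes that since $\rot$ is first order, equation~\eqref{eq2} for $\B_{\rm left}$ on $\Lambda(\mathbb{R}^2)$ pulls back verbatim. Your treatment of $\Lambda(L^2)$ via normal coordinates and the vanishing of the Christoffel symbols is essentially this same argument in different clothing. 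For $\Lambda(S^2)$, however, you go a genuinely different way: you descend from the already-established identity~\eqref{eq11} on $S^3$ through the double cover $S^3 \to \Lambda(S^2)$, trading the local flat comparison for the global quaternionic picture plus the fiber-rescaling factor $2$ recorded in the Remark after Lemma~\ref{l1}. Both routes are sound. The paper's approach has the virtue of handling $S^2$ and $L^2$ by one mechanism and of being self-contained from Lemma~\ref{l2}; your $S^3$ descent is slicker for the spherical case but makes the two halves of the lemma asymmetric and leans on the identification of the left-quaternionic field $\j\x$ with the pullback of the geodesic field on $\Lambda(S^2)$, which the paper asserts but does not prove in detail. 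Your closing remark about realizing $\Lambda(L^2)$ as $PSL(2,\mathbb{R})$ with a left-invariant metric is a natural way to restore the symmetry, and would make a clean alternative proof.
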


\begin{proof}
Let us prove the lemma for the space $(\Lambda(S^2),g_S)$.
For the points $\hat{\x} \in \Lambda(S^2)$ and $\hat{\y} \in \Lambda(\mathbb{R}^2)$ in the
corresponding neighborhoods $\hat{\x} \in \hat V_{\hat\x} \subset \Lambda(S^2)$,
$\hat{\y} \in U_{\hat{\y}} \subset \Lambda(\mathbb{R}^2)$, 
let us construct a mapping $\pr: \hat V_{\hat{\x}} \to \hat U_{\hat{\y}}$,  which is an isometry in 
vertical lines and is a local isometry in  horizontal planes up to $O(r^2)$, where $r$ is the distance in $U_{\hat \x}$.

Consider the natural Riemannian metric $g_{S}$ on $\Lambda(S^2)$ in $\hat V_{\hat{\x}}$ locally near a point $\hat{\x} \in \Lambda(S^2)$.
In horizontal planes the metric $g_S$ agrees with the Riemannian metric $h$ on the standard sphere $S^2 \subset \mathbb{R}^3$.
In vertical planes the metric $g_S$ corresponds to angles trough points on $S^2$.

Take a tangent plane $T_{\x} \subset \mathbb{R}^3$
at the point $\x=p_1(\hat \x) \in S^2$, where
$p_1: \Lambda(S^2) \to S^2$ is the natural projection along vertical coordinates.
Consider the stereographic projection $P$ from
$S^2_{\x}$ into $T_{\x}$, which keeps the points: $P(\x)=(\y)$, $\y= p_2(\hat \y)$, $p_2:
\Lambda(\mathbb{R}^2) \to \mathbb{R}^2$.
The projection $P$ is a conformal map and is an isometry up to $O(r^2)$ near $\x$.
This stereographic projection induces the required mapping
$\pr: \hat V_{\x} \to \hat U_{\y}$.

From equation \eqref{eq2} for $\B_{{\rm left};\mathbb{R}^2}$ on $\Lambda(\mathbb{R}^2)$
at $\y$ we get the the same equation for $P^{\ast}(\B_{{\rm left};\mathbb{R}^2})$
on $\Lambda(S^2)$ at $\x$
in the induced metric $P^{\ast}(g_S)$.
After we change the metric $P^{\ast}(g_S)$ on $\Lambda(S^2)$ into the natural metric $g_S$,
we get the same equation for $P^{\ast}(\B_{{\rm left};\mathbb{R}^2})$ at $\x$,
because the curl operator is a first-order operator.

The last required fact is the following: $P^{\ast}(\B_{{\rm left};S^2})$ in the standard
metric $g_S$ coincides with the geodesic vector-field $\B_{\rm left}$ on $\Lambda(\mathbb{R}^2)$.

To prove the lemma for 
$(\Lambda(L^2),g_L)$ we use analogous arguments: instead of the stereographic projection
$S^2 \to \mathbb{R}^2$, we take a conformal mapping by the identity $L^2 \subset \mathbb{R}^2$,
where the Lobachevskii plane $L^2$ is considered as the Poincar\`e unit disk on the Euclidean plane.
At the central point of the disk the mapping $L^2 \subset \mathbb{R}^2$ is an isometry.  
\end{proof}

\begin{remark}
Equation \eqref{eq1} corresponds with 
Lemma \ref{l3} for $\Lambda(S^2)$. The natural metric on a Hopf fiber for $\Lambda(S^2) \to S^2$ is proportional to
the natural metric of the Hopf fiber for $S^3 \to S^2$ with the coefficient $2$, because $S^3 \to \Lambda(S^2)$ is the double covering. 
\end{remark}

We now generalize the example of Lemma \ref{l3} for magnetic fields in domains with non-homogeneous density (volume-forms).
Let $(A,\x)$ be a complex neighborhood of a point $\x$, equipped with a Riemannian metric $g_A$ of a constant
negative scalar curvature surface.
In the example we get $A \subset L^2$, where $L^2$ is the Lobachevskii plane.
Let $(D,\y)$ be a complex neighborhood of a point in the Riemannian sphere $S^2$, equipped with the standard
Riemannian metric $g_D$ of a constant positive scalar curvature.

Let $f: (A, \x) \to (D, \y)$ be a conformal germ of open surfaces $A$ and $D$ with metrics $g_A$, $g_D$.
Consider the natural extension $F: (U, \hat{\x}) \to (V, \hat{\y})$
of the germ $f$, where $\hat{\x} \in U \subset \Lambda(A)$, $\hat{\y} \in V \subset \Lambda(D)$ are neighborhoods of points
$\hat{\x}, p_A(\hat{\x})=\x, p_A: \Lambda(A) \to A$,  $\hat{\y}, p_D(\hat{\y})=\y, p_D: \Lambda(D) \to D$; $U$, $V$
are equipped with the standard Riemannian metrics $g_U$ and $g_V$ correspondingly, which are defined using the metrics $g_A$ and $g_D$. 

Let us consider an extra copy of $U \subset \Lambda(L^2)$ with an exotic metric, which will be denoted by
$(\tilde U, h_{\tilde U})$.
Define in $\tilde U \subset \Lambda(L^2)$ the Riemannian metric $h_{\tilde U}$, which coincides with $g_U$ along horizontal planes $A \subset (U,\hat{\x})$
of $p_U: (U,\hat{\x}) \to (\Lambda(A),\x)$ and coincides with $\k^{-1}(\x)g_U$ along the vertical fiber of $p_U$, where $\k(\x)$
is a real positive-valued function, defined by the Jacobian $\k^2(\x)$ of $\dd f$ at $\x$ of the differential
$\dd f: (T(A),\x) \to (T(D),\y)$.

Let us consider an extra copy of $V \subset \Lambda(S^2)$ with an exotic metric, which is denoted by
$(\tilde V, h_{\tilde V})$.
Define in $\tilde V \subset \Lambda(S^2)$ the Riemannian metric $h_{\tilde V}$ that coincides with $\k^{-1}(\x=f^{-1}(\y))g_V$. 

Let $\bar V \to V$, $\bar V \subset S^3$, be the natural double covering, which is the isometry on horizontal planes and is the multiplication by $2$ in each vertical circle fibers of the standard projection $p: S^3 \to \Lambda(S^2)$.
Define in $\bar V$ a Riemannian metric $g_{\bar V}$ that coincides with $g_{V}$ along horizontal planes and with $\frac{1}{2}g_{V}$ along vertical fibers.

The Riemannian metrics $g_U$, $h_{\tilde U}$, $h_{\tilde V}$, $g_{V}$ and $g_{\bar V}$
determine the volume 3-forms $\dd U$ (the standard form in $\Lambda(L^2)$), $\dd \tilde U$,
$\dd \tilde V$, $\dd V$ (the standard form in $\Lambda(S^2)$) and $\dd \bar V$
(the standard form in $S^3$) in $U$, $\tilde U$, $\tilde V$, $V$ and $\bar V$ correspondingly.
Recall $A \subset L^2$ with the standard 2-volume form $\dd L$ on the Lobachevskii plane.
The volume form $\dd \tilde U$ is defined by $\dd \tilde U=\k(\x)\dd U$,
where $\dd U$ is the standard volume form in $U$, which is the product of the
horizontal standard 2-form $\dd L$ on
the Lobachevskii plane with the the standard vertical 1-form on the circle.
Analogously, $\dd \tilde V=\k^{-2}(\y)\dd V$, where $\dd V$ is the standard volume
form on $V=\tilde V \subset \Lambda(S^2)$.
The volume forms $\dd V$, $\dd \bar V$ coincide with the standard volume forms
($\dd V$ is the restriction of the standard volume form on $\Lambda(S^2)$, $\dd \bar V$
is the restriction of the standard volume form on $S^3$; $\dd \bar V = 2 p^{\ast}\dd V$,
where $\bar V$ is standardly identified with $V$ by $p: \bar V \to V$).
The volume forms $\dd V$, $\dd \bar V$ are equipped with the density functions
$\rho_V(\hat \y)=\k^{-2}(\y = p_V(\hat \y))$,  $\rho_{\bar V}(\bar \y)=\k^{-2}(\y=p_V \circ p(\bar \y))$. 

Let $\B_U$ be the magnetic field (horizontal) in $U$ with the metric $g_U$, which is defined by the geodesic
flows in $A$ with the metric $g_A$.
Define the magnetic field $\B_{{\rm left};\tilde U}$ in $\tilde U$ with the metric $h_{\tilde U}$ by $\B_{{\rm left};\tilde U}=\B_U$.

By construction, the metrics $h_{\tilde U}$ and $h_{\tilde V}$ agree (are isometric): $F_{\ast}(h_{\tilde U}) = h_{\tilde V}$.
Denote by $\B_{{\rm left};\tilde V}$ the magnetic field $F_{\ast}(\B_{{\rm leftU};\tilde U})$ in $\tilde V \subset \Lambda(S^2)$
with the metric $h_{\tilde V}$.
Denote by $\B_{{\rm left};V}$ the magnetic field $\k^{-3}(\hat \y)\B_{{\rm left}; \tilde V}$
in $V \subset \Lambda(S^2)$
with the standard  metric $g_{V}$ and with the variable density  $\rho_{V}(\hat \y)$. 
Denote by $\B^{S3}_{{\rm left}\bar V}$ the magnetic field $\k^{-3}(p(\bar \y))p^{\ast}(\B_{{\rm left}; \V})$
in $\bar V \subset S^3$
with the standard spherical metric $g_{\bar V}$ and with the variable density  $\rho_{\bar V}(\bar \y)$. 

\begin{lemma}\label{l4}
\begin{enumerate}
\item
In the domain $\tilde U$ the following equation is satisfied:
\begin{eqnarray}\label{eq3}
\begin{array}{cc}
\div(\B_{{\rm left};\tilde U})=0; \qquad \rot \B_{{\rm left};\tilde U}(\hat{\x}) = -\k(\x)\B_{{\rm left};\tilde U}(\hat{\x}), \\
\hat{\x} \in \tilde U, \quad \x=p_{\tilde U}(\hat{\x}) \in A,
\end{array}
\end{eqnarray}
where $\rot$ and $\div$ are defined for the Riemannian metric $h_{\tilde U}$ with the density $\rho_U(\hat \x)$.

\item
In the domain $V \subset \Lambda(S^2)$ the following equation is satisfied:
\begin{eqnarray}\label{eq4}
\begin{array}{cc}
\div(\B_{{\rm left};V}(\hat \y))=0; \qquad \rot \B_{{\rm left};V}(\hat \y) = -\B_{{\rm left};V}(\hat{\y}), \\
\hat{\y} \in V, \quad \y=p_V(\hat{\y}) \in D,
\end{array}
\end{eqnarray}
where $\rot$ is defined for the standard Riemannian metric $g_V$ with the density $\rho_V(\hat \y)$. 

\item
In the domain $\bar V \subset S^3$ the following equation is satisfied:
\begin{eqnarray}\label{eq5}
\begin{array}{cc}
\div(\B_{{\rm left};\bar V}(\bar \y))=0; \qquad \rot \B_{{\rm left};\bar V}(\hat \y) = -2\B_{{\rm left};\bar V}(\hat{\y}), \\
\bar{\y} \in \bar V, \quad \y = p_V \circ p(\bar{\y}) \in D,
\end{array}
\end{eqnarray}
where $\rot$ is defined for the standard spherical Riemannian metric $g_{\bar V}$ with the density $\rho_{\bar V}(\bar \y)$. 
\end{enumerate}
\end{lemma}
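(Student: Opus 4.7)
The plan is to derive the three statements in sequence, starting from Lemma \ref{l3} as the base case and then tracking how the curl and divergence operators transform under the prescribed vertical rescalings of the metric together with the prescribed density factors. All three parts share the same template: the magnetic field is transported by a prescribed map (identity, the conformal extension $F$, or the double cover $p$), the curl equation is carried along, and a conformal/vertical rescaling of the metric combined with a density factor produces the explicit eigenvalues $-\k(\x)$, $-1$, or $-2$.

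For part (1), I would begin with the identity $\rot \B_U = -\B_U$, $\div \B_U = 0$ furnished by Lemma \ref{l3} on $(\Lambda(L^2), g_U)$. The metric $h_{\tilde U}$ differs from $g_U$ only by the scalar $\k^{-1}(\x)$ along the vertical fiber of $p_U$, and the volume 3-form is modified through the density so that $\dd \tilde U = \k(\x) \dd U$. Writing $\rot \B$ as the vector field dual (via the volume form) to the exterior derivative of the 1-form $\B^{\flat}$, one observes that the horizontal 1-form $\B_{{\rm left}; \tilde U}^{\flat}$ computed with $h_{\tilde U}$ agrees with $\B_U^{\flat}$ computed with $g_U$, because $\B_U$ is horizontal and the two metrics agree on horizontal vectors. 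The change in curl then comes solely from the modification of the Hodge dual on vertical 2-form components and from the density factor $\k(\x)$ in the volume form; a direct local computation in the coordinates used in Lemma \ref{l2} shows that this combination multiplies the right-hand side by exactly $\k(\x)$. Divergence-freeness is preserved because $\B_U$ has no vertical component and the rescaling only affects the vertical direction.

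For part (2), I would first transport the equation from $\tilde U$ to $\tilde V$ via the extension $F$ of the conformal germ $f$. By construction, $F_{\ast} h_{\tilde U} = h_{\tilde V}$ and $F_{\ast} \B_{{\rm left}; \tilde U} = \B_{{\rm left}; \tilde V}$, so the curl equation transfers to $\rot \B_{{\rm left}; \tilde V}(\hat \y) = -\k(\x) \B_{{\rm left}; \tilde V}(\hat \y)$ in the metric $h_{\tilde V}$. Then I would pass from $h_{\tilde V}$ to the standard metric $g_V$ with the density $\rho_V = \k^{-2}$. Under this conformal-type rescaling the Hodge star and the flat duality pick up explicit powers of $\k$, and the definition $\B_{{\rm left}; V} = \k^{-3}(\hat \y) \B_{{\rm left}; \tilde V}$ is chosen precisely so that these powers cancel the factor $-\k(\x)$ on the right, yielding $\rot \B_{{\rm left}; V} = -\B_{{\rm left}; V}$. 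Divergence-freeness survives the rescaling by the same bookkeeping.

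For part (3), I would lift from $V$ to its double cover $\bar V \subset S^3$ by $p$. The metric $g_{\bar V}$ agrees with $p^{\ast} g_V$ horizontally but equals $\tfrac{1}{2} p^{\ast} g_V$ along vertical fibers; geometrically the vertical Hopf circle is traversed at half speed, which is exactly the phenomenon responsible for the extra factor of $2$ already recorded in the remark after Lemma \ref{l1}. The same local computation as in part (1), now with vertical rescaling by $\tfrac{1}{2}$, doubles the eigenvalue of the curl operator, while the density transfers as $\rho_{\bar V}(\bar \y) = \rho_V(p(\bar \y))$ and the $\k^{-3}$ scaling of the magnetic field is unchanged. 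This produces $\rot \B_{{\rm left}; \bar V} = -2 \B_{{\rm left}; \bar V}$, with divergence-freeness again inherited from part (2).

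The main obstacle is the bookkeeping in part (1): verifying that the specific combination of the vertical metric rescaling by $\k^{-1}$ and the density $\rho_U = \k$ produces exactly one extra factor of $\k$ on the right-hand side of the curl equation, with no leftover derivatives of $\k$. This reduces to writing the determinant formula \eqref{rot} in the rescaled metric and exploiting the fact that $\k$ depends only on the horizontal coordinate $\x \in A$, so its derivative along the vertical fiber vanishes and the new term collects cleanly into the multiplicative factor $\k(\x)$. Once part (1) is in hand, parts (2) and (3) follow by two further rescalings whose scaling exponents have been tuned precisely by the definitions of $\B_{{\rm left}; V}$ and $\B_{{\rm left}; \bar V}$.
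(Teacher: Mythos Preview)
Your plan is correct and follows essentially the same route as the paper: start from the eigen-equation of Lemma~\ref{l3} on $(\Lambda(L^2),g_U)$, handle part~(1) by the vertical rescaling $g_U\mapsto h_{\tilde U}$ (noting $\k$ is constant along fibers so no stray derivatives appear), handle part~(2) by first pushing forward isometrically via $F$ to $(\tilde V,h_{\tilde V})$ and then making the conformal passage to $(V,g_V,\rho_V)$, and handle part~(3) by the half-scale vertical lift to $\bar V\subset S^3$. The only substantive difference is that the paper carries out the conformal step in part~(2) by an explicit coordinate computation with the $1$-form $\beta\,\dd z$, verifying directly that the $\partial\lambda/\partial x$ and $\partial\lambda/\partial y$ terms cancel, whereas you defer this to ``bookkeeping''; since that cancellation is the crux (it is where the $\k^{-3}$ scaling and the density $\rho_V=\k^{-2}$ conspire to kill the variable eigenvalue), you should eventually write it out.
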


\begin{proof}
By construction, the magnetic field $\B_U$ satisfies equation \eqref{l2} in $U$.
The transformation from $U$ to $\tilde U$ is the identity, but not isometry.
The first equation \eqref{eq3} is satisfied, because the volume form in $U$ corresponds
with the metric $h_{\tilde U}$.
The transformation $\B_U \mapsto \B_{{\rm left};\tilde U}$
is frozen-in and keeps the magnetic flow.
The second equation \eqref{eq3} is satisfied, because the metric $h_{\tilde U}$ is constant
in vertical fibers and the factor $\k(\x)$ in the right side of the equation corresponds
to the partial derivatives along the vertical coordinates.
This proves equation \eqref{eq3}.

The transformation $\B_{{\rm left};\tilde U} \mapsto \B_{{\rm left};V}$ is decomposed into transformations
$$\B_{{\rm left};\tilde U} \mapsto 
\B_{{\rm left};\tilde V} \mapsto \B_{{\rm left};V}.$$
The transformation $\B_{{\rm left};\tilde U} \mapsto 
\B_{{\rm left};\tilde V} $ is an isometry
and $\B_{{\rm left};\tilde V}$ satisfies  equation \eqref{eq3} in $\tilde V$.
The transformation $\B_{{\rm left};\tilde V} \mapsto \B_{{\rm left};V}$ is conform with
the scalar factor $\k(\y)$.
This transformation changes equation \eqref{eq3} in $\tilde V$ into \eqref{eq4} in $V$ with non-uniform density.

The calculations for this transformation are as follows.
Take a domain $\tilde V$ with local coordinates $\hat{\x}=(x,y,z)$.
Take a transformation $g \mapsto \lambda g$ of the metric in $\tilde V$ into a metric in $V$
with a scale $\lambda(\hat{\x}) > 0$.
The following transformation of coordinates $x \mapsto \lambda x_1$,
$y \mapsto \lambda y_1$, $z \mapsto \lambda z_1$ is an isometric transformation
of $(\tilde{V},g)$ into $(V,\lambda g)$, where $\hat{\x}_1 =(x_1, y_1, z_1)$ are
the coordinates in $V$.
Before the transformation we get a differential $1$-form $\beta \dd z$ which is
by assumption, a proper form of the operator
$\ast \circ d$ with a proper function $-\lambda(\x)$ (see equations \eqref{analog}
with analogous calculations) in $\tilde V$. This implies $\dd(\beta\dd z) = \frac{\partial \beta}{\partial x} \dd x \wedge \dd z + \frac{\partial \beta}{\partial y} \dd y \wedge \dd z$; $\frac{\partial \beta}{\partial x} = -\lambda(\hat \x)$, $\frac{\partial \beta}{\partial y} = -\lambda(\hat \x)$. 
After the transformation we get the $1$-form $\lambda \beta \dd z_1$.
We have:
$$\dd(\lambda \beta \dd z_1) = \frac{\partial \beta}{\partial 
x} \lambda \dd x \wedge \dd z_1 + \frac{\partial \beta}{\partial y} \lambda \dd y \wedge \dd z_1 +
\frac{\beta}{\lambda} \frac{\partial \lambda}{\partial x} 
\dd x_1 \wedge \dd z_1 + \frac{\beta}{\lambda} \frac{\partial \lambda}{\partial y} \dd y_1
\wedge \dd z_1 + $$
$$\lambda\beta \dd x \wedge \frac{\partial}{\partial x}\left( \frac{1}{\lambda} \dd z\right)
+ 
\lambda\beta \dd y \wedge \frac{\partial}{\partial x}\left( \frac{1}{\lambda} \dd z\right).$$
Using $\frac{\partial}{\partial z_1} = \lambda \frac{\partial}{\partial z}$,
$\dd x \wedge \frac{\partial}{\partial x}( \frac {1}{\lambda} \dd z) =
-\frac{1}{\lambda^2}\frac{\partial \lambda}{\partial x} \dd x \wedge \dd z$, $\dd y \wedge \frac{\partial}{\partial y}( \frac {1}{\lambda} \dd z) =
-\frac{1}{\lambda^2}\frac{\partial \lambda}{\partial y} \dd y \wedge \dd z$,
we have:
$$\dd(\lambda \beta \dd z_1) = \frac{\partial \beta}{\partial x}
\lambda \dd x \wedge \dd z_1 + \frac{\partial \beta}{\partial y} \lambda \dd y \wedge \dd z_1
= \frac{\partial \beta}{\partial x} \dd x_1 \wedge \dd z_1 +
\frac{\partial \beta}{\partial y} \dd y_1 \wedge \dd z_1 = $$
$$-\lambda(\dd x_1 \wedge \dd z_1 + \dd y_1 \wedge \dd z_1).$$
This proves that $\lambda \beta \dd z_1$ is the proper $1$-form of the operator $\ast \circ \dd$
in $V$ with the proper function $-\lambda^{-1}\lambda = -1$.
Setting $\k(\x)=\lambda(\x)$, we get the required formula \eqref{eq4}.

The transformation $\B_{{\rm left};V} \mapsto \B_{{\rm left};\bar V}$ is analogous to the transformation 
$\B_U \mapsto \B_{{\rm left};\tilde U}$.
In this transformation $\B_{\rm left}$ is frozen-in and the scalar factor $2$ in the right
side of the second equation \eqref{eq5} corresponds to the transformation of the metrics
$g_{V} \mapsto g_{\bar V}$, which changes partial derivatives along the vertical coordinate.
\end{proof}

\section{Magnetic force-free configurations on non-homogeneous $S^3$}
\label{sec: force-free config}

Let $P \subset L^2$ be the right $k$-triangle (all $k$-vertices on the absolute) on the
Lobachevskii plane.
Let $f_k: P \to S^2_+$ be the conformal transformation
(the Picard analytic function in the case $k=3$) of the
square ($k$-angle) onto the upper hemisphere of the Riemannian sphere $S^2$.
The vertices $v_1,v_2, \dots,v_k$ of $P$ are mapped into
points $f(v_1),f(v_2), \dots, f(v_k)$ at the equator $S^1 \subset S^2$ and we assume that
${\rm dist}(f(v_1),f(v_2))= \dots = {\rm dist}(f(v_k),f(v_1)) = \frac{2\pi}{k}$.
Denote by $f: L^2 \to S^2$ the branched cover with ramifications at $f(v_1), f(v_2), \dots, f(v_k)$,
which is defined as the conformal periodic extension of $f_k$ on the Lobachevskii plane. It is well known that $\Lambda(S^2) = S^3/-1$, where on the right side of the formula is the quotient of the standard $3$-sphere by the antipodal involution. The fiber of $S^3 \to \Lambda(S^2) \to S^2$ over the points $f(v_1), \dots f(v_k)$ in the base is the Hopf $k$-component link, which is denoted by $l \subset S^3$. For $k=3$ link $l$ consists of 3 big circles, each two circles are linked with the coefficient $+1$,    
Denote the Jacobian of $f$ by $\k^2(\x)$, $\x \in P$, $\y = f(\x) \in S^2$.
Statement \rm{(i)} of the following lemma is a corollary from Theorem \ref{l4}. 

\begin{theorem}\label{th5}
Assume $k \ge 3$ is fixed.

\begin{enumerate}
\item
For magnetic force-free field $\B_{\rm left}$ on $S^3 \setminus l$ with the standard Riemannian
metric $g$ and the density function $\rho(\hat \y)=\k^{-2}(\y)$, $\y = p(\hat \y)$,
with the standard Hopf bundel $p: S^3 \to S^2 \to \Lambda(S^2)$,
there are $k$-component exceptional fibers $l \subset S^3$ with an infinite density.

\item
The $k$-component pinch curve $l$
of the magnetic field $\B_{\rm left}$ is the standard $k$-component Hopf link in $S^3$.
The components of $l$ are preimages of points $f(v_1), f(v_2), \dots, f(v_k)$
by the projection $p: S^3 \to \Lambda(S^2)$.

\item
In the case $k=3$ the scalar factor of the density function $\sqrt{\rho(\hat \y)}=\k^{-1}(\y)$ in equation
\eqref{eq3} has an asymptotic $(-z\ln(z))^{-1}$ near $l$, where $z$ is the distance from $\hat \y$ to $l$.
The magnetic field has the asymptotic $(-z\ln(z))^{-1}$ for $z \to 0+$.
The magnetic energy $\int B^2\ \dd\Omega$, where $B^2(\hat \y) = \k^{-2}(\hat \y)$ and
$\hat{\y} \in \Omega = S^3 \setminus l$, has the asymptotic
$\simeq \int^{+\varepsilon}_0 (z \ln^2(z))^{-1}\ \dd z < +\infty$ near a component of
a cusp curve $l$, in the standard metric on $S^3$. 

\item
In the case $k=3$, $\B_{{\rm left};3}$ is projected to tangents along trajectories of the Lorenz attractor
\cite{Ghy, 1} by a 12-sheeted branching covering $S^3 \setminus l \to S^3 \setminus l'$,
which transforms $l$ into the exceptional trefoil $l'$ of the Lorenz attractor.  
 

\item
The stereographic projection $S^3 \setminus {pt} \to \mathbb{R}^3$ transforms
$\B_{\rm left}$ into a force-free
magnetic field with a finite magnetic energy in non-homogeneous isotropic space $\mathbb{R}^3$.
This construction is analogous to \cite{3}.
\end{enumerate}
\end{theorem}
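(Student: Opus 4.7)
The plan is to treat the five parts of Theorem~\ref{th5} in order, reducing almost everything to Lemma~\ref{l4} applied to the conformal covering $f:L^2\to S^2$ built from the ideal $k$-triangle $P$. For part~(1), the vertices $v_1,\dots,v_k$ lie on the absolute of $L^2$, i.e.\ at hyperbolic infinity, while their images $f(v_i)\in S^2$ are ordinary points; hence the Jacobian $\k^2(\x)$ of $f$ blows up near each $f(v_i)$, and part~3 of Lemma~\ref{l4} applied to this $f$ yields a force-free field $\B_{\rm left}$ on $S^3\setminus l$ whose density $\rho=\k^{-2}$ is infinite along $l$. Part~(2) is essentially built into the construction: since $p:S^3\to\Lambda(S^2)\to S^2$ factors through the Hopf fibration, the preimage of each $f(v_i)$ is a Hopf fibre, a great circle of $S^3$, and any two distinct Hopf fibres have linking number $+1$, so $l$ is precisely the standard $k$-component Hopf link.

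For part~(3), with $k=3$ the map $f_3$ is a branch of the Picard modular function (the inverse of the $\lambda$-function up to a branch choice). Near each cusp $v_i$ the classical expansion $\lambda(\tau)\sim 16\,e^{\i\pi\tau}$ combined with the hyperbolic metric $\dd s_L^2 = y^{-2}(\dd x^2+\dd y^2)$ and the spherical metric on $S^2$ gives the conformal factor $\k(\y)\sim z|\ln z|$, where $z$ is the spherical distance from $f(\tau)$ to $f(v_i)$. Therefore $\sqrt{\rho}=\k^{-1}\sim(-z\ln z)^{-1}$ and $B^2\sim(z\ln z)^{-2}$ near $l$. Integrating in cylindrical coordinates around a component of $l$ introduces a transverse volume factor $z\,\dd z$, reducing the energy near $l$ to $\int_0^{\varepsilon}(z\ln^2 z)^{-1}\,\dd z$; the substitution $u=\ln z$ turns this into a convergent integral $\int u^{-2}\,\dd u$.

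For part~(4) I would invoke Ghys's identification~\cite{Ghy,1} of the Lorenz template with the geodesic flow on the modular surface $L^2/\mathrm{PSL}(2,\mathbb{Z})$. Since $L^2/\Gamma(2)$ is the thrice-punctured sphere uniformizing our $f_3$ and $[\mathrm{PSL}(2,\mathbb{Z}):\Gamma(2)]=6$, composing the induced $6$-sheeted cover of unit tangent bundles with the $2$-sheeted cover $S^3\to\Lambda(S^2)$ produces the $12$-sheeted branched covering $S^3\setminus l\to S^3\setminus l'$ sending $\B_{{\rm left};3}$ to the Lorenz tangent field and $l$ to the exceptional trefoil $l'$. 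Part~(5) then follows because stereographic projection $S^3\setminus\{\mathrm{pt}\}\to\mathbb{R}^3$ is conformal, so the conformal-plus-density transport argument of Lemma~\ref{l4} preserves both the force-free equation and the finite-energy estimate of~(3). The main obstacle will be part~(3): extracting the exact $(-z\ln z)^{-1}$ rate from the Picard function and carefully tracking the transverse volume factor through the chain of metric changes $h_{\tilde V}\to g_V\to g_{\bar V}$ (now in the spherical rather than Euclidean measure) is where the honest computation lies; part~(4), though conceptually the deepest, reduces to a citation of~\cite{Ghy} once the covering degree $12=2\times 6$ has been identified.
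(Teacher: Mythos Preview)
Your outline matches the paper's proof closely: part~(i) is deduced from Lemma~\ref{l4}, part~(iii) is obtained from the cusp asymptotic of the modular/Picard function (the paper writes the map as a composition $I_2\circ F\circ I_1$ with $F(w)\sim\exp(\i\pi w/24)$ and arrives at the same $\k(z)\approx -z\ln z$), and part~(iv) is the Ghys identification together with the covering count $12=2\times 6$. The paper gets the factor~$6$ by subdividing the ideal triangle $P=\triangle C_1C_2C_3$ into copies of the $(2,3,\infty)$ fundamental domain rather than quoting $[\mathrm{PSL}(2,\mathbb{Z}):\Gamma(2)]=6$, but these are the same count; parts~(ii) and~(v) are treated in both places as immediate from the construction.

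One slip to fix in part~(i): you write that ``the Jacobian $\k^2(\x)$ of $f$ blows up near each $f(v_i)$'', but the opposite is true. The map $f$ sends an infinite-area hyperbolic cusp onto a finite spherical disk, so $\k^2\to 0$ as $\y\to f(v_i)$ (consistent with your own asymptotic $\k\sim z|\ln z|$ in part~(iii)). It is $\rho=\k^{-2}$ that blows up, which is exactly the infinite density asserted in~(i); your conclusion is right but the sentence justifying it is inverted.
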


\begin{proof}$\rm{(iii)}$

Let $H^{\uparrow}$ be the upper half-plane with the complex coordinate, denoted by $w$,
$H^{\uparrow} \equiv L$, where $L$ is the Lobachevskii plane, equipped with the standard
conformal metric, $H^{\downarrow}$
be the lower half-plane, $H_+$ be the right half-plane $H_+ = \{w \in L, \vert \Re w > 0\}$
and $H_-$ be the left half-plane.
We identify $H^{\downarrow}$ with the Lobachevskii plane $L$, $H_-$ with the Riemannian half-sphere.
Let $D = \{ w \in H^{\uparrow}, \vert \tau \vert > 1, \vert \Re w \vert < 1\}$ be the triangle in $H^{\uparrow}$.
Let us consider the analytic function $F: D \to H_+$, $F(\infty)=\infty$,
$F(+1)=\i$, $F(-1)=-\i$. From the conditions we get $F(\i)=0$. 

Take the triangle $C = \{a=v_1,c=v_2,
-c=v_3\}$ on $H^{\downarrow}$, $a=0, c=+1$.
The considered triangle is mapped onto the triangle $D=\{\infty,c,-c\}$ in the upper
half-plane $H^{\uparrow}$ by $I_1: x \mapsto x^{-1}=w$
(see \Fig{fig: infinit}).

\begin{figure}
\begin{center}
\includegraphics[width=0.7\columnwidth]{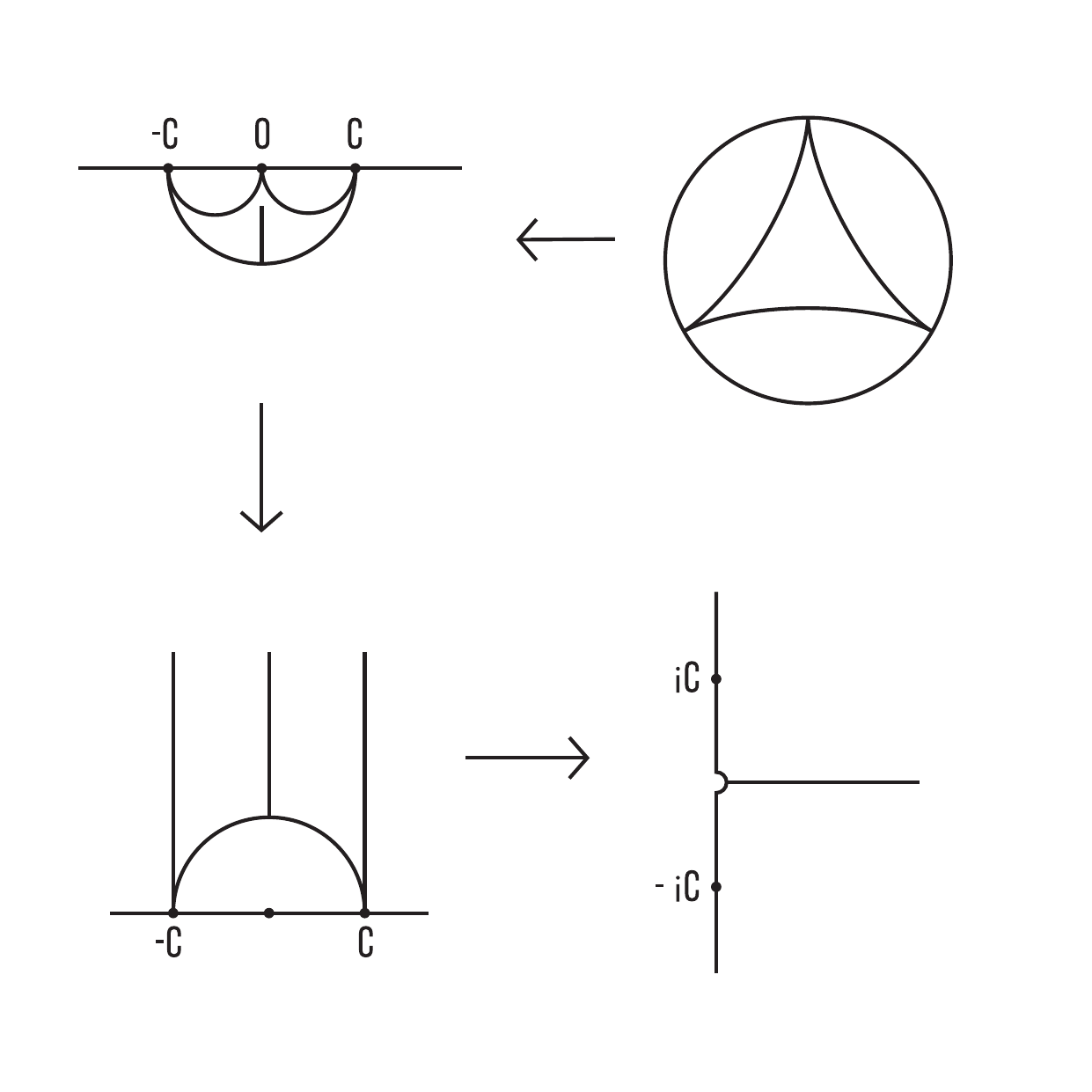}
\end{center}
\caption[]{
A transformation of the standard hyperbolic triangle onto the Riemannian half-sphere by the modular function.
}
\label{fig: infinit}
\end{figure}

The function 
$f: C \to H_-$ is the composition of the maps
$I_1: x \mapsto x^{-1}=w$, $F: D \to H^+$, $F: w \mapsto F(w)=v$, $I_2: H_+ \to H_-$, $I_2: v \mapsto v^{-1}=y$;
$f = I_2 \circ F \circ I_1: x \mapsto y$.
The function $F$ is called the modular function, this function has the asymptotic
$F \simeq w \mapsto \exp(\frac{i\pi w}{24})=v$, when $w \to +\i\infty$.
The goal is to calculate the scalar factor $\k$ near the origin $f(0)=0$ in the target domain.

In $C \subset H^{\downarrow}$ we get the metric on the hyperbolic plane, near the origin on the boundary.
The distance between two points on a vertical ray is given by the logarithmic scale.
In $H_-$ near the origin the  metric is the Euclidean metric. 

We get: $\dd y = \exp(-1/x)/x^2\dd x$ and $\frac{\dd x}{x}=\dd l$, where $l$ the distance in the domain space, $x$ is the Euclidean coordinate in the domain space, $y$ is the coordinate in the target space, which corresponds to the metric.
Therefore, the scalar factor $\k^{-1}(\hat \y)$ depends of the distance $z$ from the cusp $L$ in the target space
$S^3$ with the standard metric as follows: 
$$ \k(z) \approx -z\ln(z).$$

By this asymptotic we get the asymptotic of the magnetic energy is given by the prescribed integral over $z$. 
\end{proof}

\begin{proof}$\rm{(ii)}$, $\rm{(iv)}$

The Lorenz attractor by \cite{Ghy} coincides with the geodesic flows on the orbifold
$(2, 3, \infty)$ from \cite{1}.
The spherization of the tangent bundle over the orbifold $(2, 3, \infty)$,
which is the space of the geodesic flow,
is an open manifold diffeomorphic to the complement of the trefoil in the $3$-sphere
$S^3 \setminus l'$.
The orbifold $(2, 3, \infty)$ is the quotient 
of the Lobachevskii plane by the corresponding Fuchsian group.
The fundamental domain $P'$ of this orbifold is the triangle $\triangle O C_1 C_2$ with angles
$\left(\frac{\pi}{3},0,0\right)$.
This triangle is contained as a $\frac{1}{3}$-triangle in the triangle
$P = \triangle C_1 C_2 C_3$ with the angles $(0,0,0)$ with the vertex on the absolute
(see \Fig{fig: Circ}).
The fundamental domain $Q$ of the magnetic force-free field $\B_{\rm left}$ for
$k=3$ is the 2 sheet covering over
the space of $S^1$-fibration over the union $P \cup P_1$ of 2 triangles $P = \triangle C_1 C_2 C_3$, $P_1= \triangle C_2 C_3 C_4$, which are identified along the fibration over the common edge $(C_2 C_3)$. Therefore, the fundamental domain $Q$ is a $6$-sheeted covering space over $\Lambda(P')$.

\begin{figure}
\begin{center}
\includegraphics[width=0.7\columnwidth]{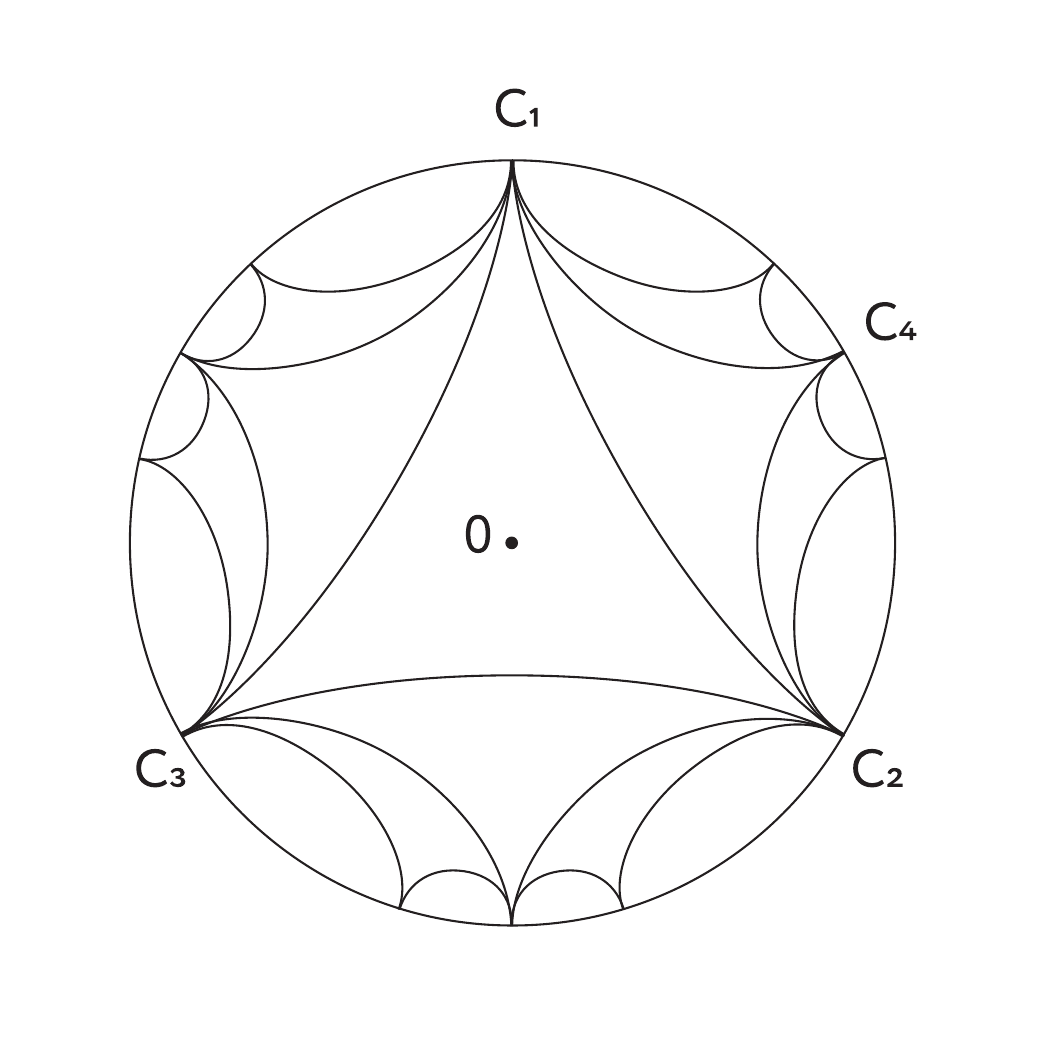}
\end{center}
\caption[]{The covering over the orbifold $(2, 3, \infty)$ for Lorenz attractor. The point $C_4$ is the image of $C_3$ with respect to the central symmetry over the fold point on $(C_1,C_2)$.}
\label{fig: Circ}
\end{figure}

According to \cite{Ghy}, the spherization of the tangent bundle $\Lambda(P')$ over the fundamental
domain $P'$ is diffeomorphic to $S^3 \setminus l'$, where $l'$ is the exceptional fiber (the trefoil),
which corresponds to the vertex of the domain $P'$, the vertex are identified by an
action of the Fuchsian group.
By the construction the spherization of the tangent bundle $\Lambda(P \cup P_1)$ over the fundamental
domain $P \cup P_1$ is diffeomorphic to $ \Lambda(S^2) \setminus l''$,
$S^3/-1 = \Lambda(S^2)$,
where $l''$ is the union of 3 exceptional fibers,
which are correspondent to the vertex $f(v_1), f(v_2), f(v_3)$ of the $\frac{1}{2}$-fundamental domain $P$.
This proves that $\Lambda(S^2) \setminus l''$ is a  $6$-covering space over $S^3 \setminus l'$,
which is branched over the trefoil $l'$.

A neighbourhood of the exceptional trefoil $l'$ in the Lorenz attractor is covered
by a non-connected neighbourhood of  $l'' \subset S^3/-1$, which is the standard 3-Hopf link.
An extra $2$-covering $S^3 \to \Lambda(S^2)$ determines the required $12$
covering $S^3 \setminus l \to S^3 \setminus l'$, which is also branched over the trefoil $l'$.
\end{proof}

\begin{remark}
By Theorem \ref{th5}, $iii$ the magnetic field $\B_{\rm left}$ on $S^3 \setminus l$
is compactified into the magnetic field on $S^3$, which tends to infinity on $l \subset S^3$.
The magnetic field $\B_{\rm left}$ is an $Z_{6}$-equivariant 
with respect to the standard action $Z_{6} \times S^3 \setminus l \to S^3 \setminus l'$ of the cyclic group
of the order $6$, therefore the magnetic field $\hat \B_{\rm left}$ on the lens quotient
$\hat Q=(S^3 \setminus l)/Z_{6}$ is well-defined.
The domain $\hat Q$ with magnetic field is a covering space over the domain with the Lorenz attractor
in $S^3$, over the exceptional fiber $l' \subset S^3$ the covering is ramified. 
\end{remark}

\section{MHD-solitons}
\label{55}

By MHD-solitons we mean quasi-stationary solutions of the ideal MHD equations.
An example of MHD-soliton was discovered by \cite{3} and generalized by \cite{4}.
For these examples the velocity field is parallel to the magnetic field.
We shall define analogous examples of solitons in $S^3$, where the velocity field is
perpendicular to the magnetic field.
Example \ref{ex: 1} is constructed by means of the Hopf magnetic field.
Example \ref{ex: 2} is a hyperbolic analog, constructed using a force-free magnetic
field from Section \ref{sec: force-free config}.
This hyperbolic soliton is defined in a non-homogeneous $S^3$ with pinches.
The velocity field is perpendicular to the gradient of the density function $\rho$.
This means that the density function depends not of time.

We consider MHD-solitons for the sphere $S^3$ with the standard metric $g$ with the constant and variable
density $\rho(\hat{\y})$, $\hat{\y} \in S^3$, see \cite{A-Kh} Remark 1.6 p.\ 262 and
Remark 1.1 p.\ 120, for the MHD-equations on a Riemannian manifold.
The density positive function $\rho(\hat{\y})$ is equivalent 
that the standard metric $g$ is changed $g \mapsto \rho(\hat{\y})^{-\frac{1}{3}} g$
by a conformal transformation.

A quasi-stationary solution means that the velocity field $\v$ does not depend on time
(see equation \eqref{eq4.2}).

\begin{eqnarray}\label{MHD}
& \frac{\partial \B}{\partial t} = -\{ \v, \B\}, \label{eq4.1}\\
& \frac{\partial \v}{\partial t} = -(\v, \nabla) \v + \rot \B \times \B - \grad p, \label{eq4.2}\\
& \div(\B)=\div(\v) = 0.
\end{eqnarray}

\begin{example}\label{ex: 1}
Assume that the standard $S^3$ is homogeneous: $\rho \equiv 1$.
Define $\v = \i$; $\B(t)=\cos(2t)\B_{\rm left} + \sin(2t)\B_{\rm left}^{\ast}$, 
where $\i$ is the vertical (right) vector field on $S^3$: the Hopf field $\i=\B_{\rm right}$,
constructed in Section \ref{sec: eigenfunctions}; $\B_{\rm left}$, $\B_{\rm left}^{\ast}$ are two horizontal (left) vector fields:
the geodesic vector field $\B_{\rm left}$, constructed in subsection \ref{sec: b_left} and its
conjugated geodesic vector field $\B^{\ast}_{\rm left}$.
By construction $\B_{\rm left}$ and $\B_{\rm left}^{\ast}$ are in the plane of the the basic
quaternion (right) vector fields $\j$, $\k$ on $S^3$.

Then, by Theorem \ref{l1}, the equation \eqref{eq4.2} is satisfied:
$\rot(\v)=2\v$, 
$\rot(\B)= - 2\B$, $(\v, \nabla) \v =0$, $\rot(\B)
\times \B = 0$.
Also equation $(\ref{eq4.1})$ is satisfied:
$-\{ \v, \B\}=\rot(\v \times \B) = -2\sin(2t)\J + 2\cos(2t)\K$. 
\end{example}

\begin{example}
\label{ex: 2}
Assume that the standard $S^3$ is non-homogeneous: $\rho(\hat \y)=\k^{-2}(\y)$, as in
Theorem \ref{th5}, $k \in 3, 4, \dots $ is fixed.
Define $\v = \rho(\hat \y)\I $, $\B(t)=\rho(\hat \y)(\cos(2t)\B_{\rm left} + \sin(2t)\B^{*}_{\rm left})$,
where $\i$ is the Hopf vertical vector field on $S^3$, $\B_{\rm left}$ is the horizontal vector field,
determined by the geodesic flow in Theorem \ref{th5}, and $\B^{*}_{\rm left}$ is vector horizontal field,
determined by the conjugated geodesic flow.
Then the equation \eqref{eq4.2} is satisfied:
$\rot(\v)=2\v$;
by Lemma \ref{l4}, equation \eqref{eq5} we get:
$\rot(\B)=-2\B$, $\rot(\B) \times \B = 0$; the equation \eqref{eq4.1} is satisfied:
$-\{ \v, \B\}= 2\rho(\hat \y)(-\sin(2t)\B_{\rm left}+ \cos(2t)\B_{\rm left}^{\ast})$. 
\end{example}

Example \ref{ex: 2} admits the following properties: structural stability and
hyperbolicity of magnetic flow.
In Example \ref{ex: 1} the Larmor radii of trajectories of particles are curved along the
direction of the velocity.
In Example \ref{ex: 2} they are curved in the opposite direction.

\section{Helicity Invariants}

Theorem \ref{l4} demonstrates
that Ghys-Dehornoy hyperbolic flows \cite{1} determines stationary solutions of MHD-equations,
which was recalled in Section \ref{55}.
As the main example we take the simplest flow with the Lorenz attractor.
We will calculate quadratic helicities for this solution.
The calculation is based on the standard arguments from ergodic theorems.
The calculation of quadratic helicities $\chi^{(2)}$ is analytic.
The calculation of $\chi^{[2]}$ is geometrical and possible with the assumption
that the magnetic field configuration admits an additional symmetry.
The calculation of $\chi^{[2]}$ for the magnetic configuration itself is an open problem. 

For a homogeneous domain $\Omega$
inequalities for magnetic field $\B$:
$$ 2\chi^{[2]} \ge \chi^{(2)}\vol^{-1}(\Omega) \ge \chi^2 \vol^{-2}(\Omega)$$
are satisfied \cite{Akhmetev-2012-278-10-ProcSteklov}.
In these inequalities $\chi^{[2]}$ and $\chi^{(2)}$ are quadratic helicities
and $\chi$ is the standard helicity.
See \cite{Akhmetev-2012-278-10-ProcSteklov} for definitions of the quadratic helicities.
All of these are invariants in ideal MHD.
For non-homogeneous domain $\Omega$
with the density function $\rho$ the inequalities are analogous (see \cite{A-C-S} the right inequality
for $\chi^{(2)}$ in a non-homogeneous domain).

For the Hopf magnetic force-free field $\B_{\rm right} = \I$ on the homogeneous $\Omega = S^3$ we get: 
$$ 2\chi^{[2]} \equiv \chi^{(2)}\vol^{-1}(S^3) \equiv \chi^2 \vol^{-2}(S^3), $$
where $\vol(S^3)$ is the volume of the sphere $S^3$.

\begin{theorem}\label{th6}
The quadratic helicity $\chi^{(2)}$ of the magnetic field $\B_{\rm left}$ in the
non-homogeneous domain $\Omega$, constructed by Theorem $\ref{th5}$,
takes the minimal possible value
$$\chi^{(2)} \equiv \frac{\chi^2}{ \vol(\Omega)},$$
where $\chi$ is the helicity of $\B_{\rm left}$.
\end{theorem}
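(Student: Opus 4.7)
The plan is to deduce equality in the Cauchy--Schwarz-type bound $\chi^{(2)} \ge \chi^2 \vol^{-1}(\Omega)$ by exploiting the ergodicity of the Anosov geodesic flow that underlies $\B_{\rm left}$, as established in Theorem~\ref{th5}(iv). Concretely, I would use the representation of the quadratic helicity from \cite{Akhmetev-2012-278-10-ProcSteklov, A-C-S}, which in the non-homogeneous setting takes the form
$$\chi^{(2)} = \int_\Omega h(\hat\y)^2\, \rho(\hat\y)\, \dd\vol(\hat\y),$$
with $h(\hat\y)$ the field-line helicity, obtained as the Birkhoff time-average of the pointwise helicity density along the trajectory of $\B_{\rm left}$ through $\hat\y$. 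Since the standard helicity satisfies $\chi = \int_\Omega h\,\rho\,\dd\vol$, applying Cauchy--Schwarz to the probability measure $\vol(\Omega)^{-1}\rho\,\dd\vol$ already yields the stated lower bound, with equality precisely when $h$ is constant almost everywhere.

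The central step is to show that $h$ is in fact constant. By Theorem~\ref{th5}(iv), the trajectories of $\B_{\rm left}$ project, through a $12$-sheeted branched covering, onto the Lorenz-type Anosov geodesic flow on $S^3 \setminus l'$ studied in \cite{Ghy, 1}. Such a flow is mixing and hence ergodic with respect to its smooth invariant volume, and this ergodicity lifts through the finite covering to the flow of $\B_{\rm left}$ on $\Omega = S^3 \setminus l$. Birkhoff's ergodic theorem then forces $h(\hat\y)$ to coincide almost everywhere with the spatial average $\chi/\vol(\Omega)$, and substituting the constant function $h \equiv \chi/\vol(\Omega)$ back into the integral representing $\chi^{(2)}$ returns exactly $\chi^2/\vol(\Omega)$.

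The principal technical obstacle is controlling the behaviour of the integrands near the singular Hopf link $l \subset S^3$, where Theorem~\ref{th5}(iii) provides $\k(z) \approx -z\ln z$ and hence a simultaneous blow-up of $\rho$ and of $|\B_{\rm left}|$. I would verify that the Birkhoff averages defining $h$ exist $\rho\,\dd\vol$-almost everywhere and that all of the relevant integrals converge; this should follow from the finite magnetic energy estimate $\int B^2\,\dd\Omega < \infty$ of Theorem~\ref{th5}(iii) together with the fact that $l$ carries zero Liouville measure, so that the ergodic identification genuinely extends from an almost-everywhere statement to an equality of spatial integrals rather than a merely formal one.
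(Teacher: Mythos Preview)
Your strategy is sound and would work, but it is considerably heavier than what the paper actually does. Both arguments reduce the claim to showing that the field-line helicity $h$ (the paper calls it $\mathcal{A}$) is constant on $\Omega$, and both then conclude by the equality case of the Cauchy--Schwarz bound. The difference lies in how constancy of $h$ is obtained.

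You appeal to dynamics: the geodesic flow on the hyperbolic orbifold is Anosov and ergodic, this ergodicity lifts to the $12$-sheeted cover, and Birkhoff's theorem then forces the time-average $h$ to be a.e.\ constant. That is correct, but it drags along nontrivial baggage---you must check that ergodicity survives the time-reparametrisation by $\k$ when passing from $\Lambda(L^2)$ to $S^3\setminus l$, that it lifts through the branched cover, and that the integrability near $l$ is good enough for the ergodic theorem to apply, as you yourself flag.

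The paper avoids all of this with a one-line algebraic observation: the quantity being averaged along magnetic lines, namely $(\A,\B)\rho$, is already a \emph{pointwise} constant before any averaging takes place. Indeed, from $\rot\B_{\rm left}=-2\B_{\rm left}$ one may take $\A=\tfrac{1}{2}\B_{\rm left}$ (up to sign), so $(\A,\B)\rho=\tfrac{1}{2}\vert\B_{\rm left}\vert^{2}\rho=\tfrac{1}{2}\k^{2}\cdot\k^{-2}=\tfrac{1}{2}$. Thus $h\equiv\tfrac{1}{2}$ identically, no ergodicity needed. Your route establishes more (that the flow is ergodic) but uses more; the paper's route exploits the special structure of a curl-eigenfield in a domain whose density is tailored precisely so that $\vert\B\vert^{2}\rho$ is constant.
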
 

\begin{proof}
Let us prove that the field line helicity
function $\mathcal{A}(\x)$ \cite{Yeates-Hornig-2011-18-102118-PhysPlasm} is
constant in $\Omega = S^3 \setminus l$.  
This function is defined by the average of 
$(\A, \B)\rho$ along the magnetic line, issued from the point $\x \in \Omega$.
By equation \eqref{eq5} the vector-potential $\A$ coincides with $\frac{1}{2}\B$ and
$(\B, \B) = k^{2}(\hat \y)$,
$\rho(\hat \y) = k^{-2}(\hat \y)$ by Theorem \ref{th5} (iii).
We get the function $\mathcal{A}(\x)$ is a constant, this  implies that asymptotic linking number is uniformly
distributed in $\Omega$ and $\chi^{(2)}$ contains the minimal value.
\end{proof}

The magnetic field $\B_{\rm left}$ on $S^3$ from equation \eqref{th5} admits a cyclic $Z_{4}$-transformation $\i: S^3 \to S^3$
along the Hopf fibers, which is defined by the complex multiplication.
This transformation maps $\J$ to $-\J$ in Example \ref{ex: 1}, and maps $\B_{\rm left}$ to $-\B_{\rm left}$
in Example \ref{ex: 2}.
On the non-homogeneous domain which is the quotient $\hat \Omega = S^3/\J$ with the total volume $\vol(\hat \Omega)$
a magnetic field $\hat \B_{\rm left}$ with the prescribed local coefficient system is well-defined and the quadratic helicities
$\hat \chi^{[2]}$ and $\hat \chi^{(2)}$ are well-defined.
This construction is motivated by \cite{Ze} as a model of superconductivity.

\begin{theorem}\label{th7}
The quadratic helicities $\hat \chi^{[2]}$ and $\hat \chi^{(2)}$, and the helicity $\chi$ of $\hat \B_{\rm left}$
in $\hat \Omega$ satisfy the equation:
$$ \hat \chi^{[2]} \equiv 2 \hat \chi^{(2)} \vol^{-1}(\hat \Omega) \equiv \hat \chi^{2}\vol^{-2}(\hat \Omega).$$
\end{theorem}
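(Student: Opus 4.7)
The plan is to exploit the pointwise constancy of the field line helicity function $\mathcal{A}(\x)$, already established in the proof of Theorem~\ref{th6}, together with the additional $Z_4$-equivariance of $\B_{\rm left}$, in order to force the chain of inequalities $2\chi^{[2]}\ge \chi^{(2)}\vol^{-1}(\Omega)\ge \chi^2\vol^{-2}(\Omega)$ to collapse to the asserted equalities on $\hat\Omega$. The key input is already in place: on $\Omega=S^3\setminus l$ the field $\B_{\rm left}$ is a curl-eigenvector with $\A=\tfrac{1}{2}\B$ and, by Theorem~\ref{th5}\,(iii), $(\B,\B)\rho\equiv 1$, so $\mathcal{A}(\x)$ is genuinely constant. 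Because the generator $\i$ of the $Z_4$-action is an isometry of $S^3$ that preserves Hopf fibres and sends $\B_{\rm left}$ to $-\B_{\rm left}$, it preserves each magnetic line set-theoretically; hence the constancy of $\mathcal{A}$ descends to the quotient field $\hat\B_{\rm left}$ on $\hat\Omega$ equipped with its local coefficient system.

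Granted the constancy of $\hat{\mathcal{A}}\equiv\hat{\mathcal{A}}_0$, the right-hand identity $2\hat\chi^{(2)}\vol^{-1}(\hat\Omega)\equiv \hat\chi^{2}\vol^{-2}(\hat\Omega)$ follows from the degenerate case of the Cauchy--Schwarz inequality used in the proof of Theorem~\ref{th6}. Explicitly, one writes $\hat\chi=\hat{\mathcal{A}}_0\,\vol(\hat\Omega)$ and evaluates $\hat\chi^{(2)}=\tfrac{1}{2}\hat{\mathcal{A}}_0^{\,2}\vol(\hat\Omega)$; the extra factor $\tfrac{1}{2}$, accounting for the discrepancy with the corresponding identity on $S^3$, reflects the fact that the local coefficient system pairs each magnetic line in $\hat\Omega$ with its $\i$-image of opposite sign, so the self-pairing contribution is halved.

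The main obstacle is the left-hand identity $\hat\chi^{[2]}\equiv 2\hat\chi^{(2)}\vol^{-1}(\hat\Omega)$, since $\chi^{[2]}$ is a geometric invariant defined from the asymptotic linking of pairs of field lines, and, as noted just before Theorem~\ref{th6}, its computation for the unquotiented Ghys--Dehornoy configuration is open. The extra symmetry of $\hat\Omega$ supplies what is missing. On the Anosov geodesic flow underlying $\hat\B_{\rm left}$, Birkhoff's ergodic theorem applied to the flow on the orbifold $(2,3,\infty)$ forces almost every ordered pair of field lines to have the same asymptotic linking density, equal to $\hat{\mathcal{A}}_0/\vol(\hat\Omega)$. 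Consequently the integrand defining $\hat\chi^{[2]}$ is a.e.\ constant on $\hat\Omega\times\hat\Omega$, and integrating over a fundamental domain for the $\langle\i\rangle$-action yields exactly $2\hat\chi^{(2)}\vol^{-1}(\hat\Omega)$. I would carry out this step by fixing such a fundamental domain, applying the ergodic theorem to convert the double integral into a single constant times $\vol(\hat\Omega)^{2}$, and tracking the combinatorial factor of $2$ produced by the local coefficient system; the $Z_4$-invariance is essential here to ensure that the pairing of field lines descends consistently to the quotient.
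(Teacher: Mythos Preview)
Your approach is genuinely different from the paper's. You argue abstractly: the constancy of the field-line helicity $\mathcal{A}$, carried over from Theorem~\ref{th6}, collapses the right inequality, and then you invoke Birkhoff's \emph{ergodic} theorem for the Anosov geodesic flow to force the asymptotic pairwise linking density to be almost everywhere constant, which is meant to collapse the left inequality for $\hat\chi^{[2]}$. The paper, by contrast, never appeals to Theorem~\ref{th6} or to Cauchy--Schwarz; it lifts the problem to the universal branched cover $L\times S^1\to\Lambda(S^2)/I$, represents each magnetic line $\bar l=l\cup I(l)$ by a $G$-invariant family of \emph{unoriented} geodesics on the Lobachevskii plane, and uses a theorem of Birkhoff on \emph{linking numbers of acyclic geodesics}: because $\bar l$ is null-homologous, $n(\bar l_1,\bar l_2)$ equals (minus) the number of geodesic intersection points in the fundamental domain $P$. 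An ergodic count of such intersections gives the explicit density $\tau^{-2}n(\bar l_1,\bar l_2)=(\pi(k-2))^{-1}$, and all three of $\hat\chi$, $\hat\chi^{(2)}$, $\hat\chi^{[2]}$ are then evaluated directly as powers of $\vol$ times $(\pi(k-2))^{-2}$, from which the stated identities are read off.

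The main concern with your route is that you do not isolate \emph{why} passing to the quotient $\hat\Omega$ is what makes $\chi^{[2]}$ computable, whereas the paper explicitly flags the unquotiented calculation as open. Mere ergodicity of the geodesic flow is already available on $\Omega$ itself; what the $\i$-quotient buys in the paper is that each magnetic line becomes identified with its orientation-reverse, so that the associated geodesic cycle is null-homologous and the Birkhoff linking--intersection formula applies. Your explanation (``the $Z_4$-invariance is essential to ensure the pairing descends consistently'') does not capture this acyclicity mechanism, and your handling of the stray factor of $2$ via a ``halved self-pairing'' is ad hoc; in the paper the factor simply falls out of the three explicit evaluations. If you want your abstract argument to go through, you would need to explain precisely which functional equality for $\hat\chi^{[2]}$ the ergodic theorem delivers that it does \emph{not} deliver before taking the quotient.
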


\begin{proof}
Let us calculate quadratic helicities for magnetic field in $S^3/\i =\Lambda(S^2)/I$, equipped with the metric on the Lobachevskii plane $L$. 

Take the universal branching covering
$L \times S^1 \to \Lambda(S^2)/I$ which is the quotient of the covering space $L \times S^1$ by
the corresponding Fuchsian group $G$.
A magnetic line $l$ in $\Lambda(S^2)/I$ is represented
by the corresponding collection $\{\lambda_i\}$ of non-orientable geodesics on the Poincar\'e plane, invariant with respect to $G$.
For rational geodesic the collection 
$\{\lambda_i\}$ is finite in the fundamental domain $P \subset L$ of $G$.
For generic $l$ the collection $\{\lambda_i\}$ is dense in $L$.
Because the involution $I: \Lambda(S^2) \to \Lambda(S^2)$, geodesics $\lambda_i$ and
$-\lambda_i$ with the opposite orientation are correspondingly identified.

The linking number $n(l_1 \cup I(l_1),l_2 \cup I(l_2))$ between two closed magnetic lines $l_1$, $l_2$
is calculated as number of intersection points in the fundamental domain $P$ of the two
collections $\{\lambda_{1,j}\}$, $\{\lambda_{2,i}\}$ of rational geodesics.
Each intersection point is taken with the negative sign.
This statement is a particular case of a Birkhoff's Theorem about linking number of two acyclic geodesics.
The collection $\{\lambda_i \cup -\lambda_i\}$ is acyclic (is null-homologous). 
A calculation of the linking number $n(l_1,l_2)$ is complicated \cite{1}.

Denote $l_a \cup I(l_a)$, $a=1,2$ by $\bar{l}_a$.
After the normalization of the linking number with respect to magnetic lengths
of $\bar{l}_1,\bar{l}_2$, we get much simpler calculation of $n(\bar{l}_1,\bar{l}_2)$.
The number of intersection points in $P$ of two geodesic is calculated as $\tau^2S^{-1}(P)$,
where $\tau$ is the natural parameter on geodesic, $S(P)$ is the square of the domain $P$ (the complete proof is based on ergodicity and is omitted).
We get $\tau^{-2}n(\bar{l}_1,\bar{l}_2) = (\pi(k-2))^{-1}$, where $\tau$ is the parameter
of the magnetic lengths, $\pi(k-2)$ is the square of the fundamental domain ($k$-angles) $P(k)$ on the Lobachevskii plane. 

For the square of the helicity we get:
$$ \hat{\chi}^2 = (\pi(k-2))^{-2} \vol^4(\Lambda(S^2)/I).$$
For the quadratic helicity $\hat{\chi}^{(2)}$ is better to use the formula for triples magnetic lines,
(see \cite{A-C-S} and \cite{Akhmetev-2012-278-10-ProcSteklov}).
We get:
$$ \hat{\chi}^{(2)} = (\pi(k-2))^{-2} \vol^3(\Lambda(S^2)/I).$$
For the quadratic helicity $\hat{\chi}^{[2]}$ we get:
$$ 2\hat{\chi}^{[2]} = (\pi(k-2))^{-2} \vol^2(\Lambda(S^2)/I).$$
\end{proof}

\[  \]
Petr Akhmet'ev was supported in part by the RFBR GFEN 17-52-53203;
RFBR 16-51-150005.
Simon Candelaresi acknowledges financial support from the UK's
STFC (grant number ST/K000993).
Alexandr Smirnov acknowledges financial support from the Ministry of Education and Science of the Russian Federation in the framework of increase Competitiveness Program of NUST "MISIS".

\bibliographystyle{jpp}
\bibliography{references}

\end{document}